\newcommand{\EBeta}[2]{\mathrm{B}\left(#1,#2\right)}
\newtheorem{theorem}{Theorem}
\newtheorem{Lemma}[theorem]{Lemma}
\begin{document}

\begin{frontmatter}

\title{On The Moment Distance Between Sensors and Anchor Points\tnoteref{t1}}
\tnotetext[t1]{This is a sequel paper to \cite{kapelkokranakisIPL}, which appeared and ranked $1$st on the Top $25$ for the journal Information and Processing Letters for a period of 18 months.}
\author[pwr]{Rafa\l{} Kapelko\corref{cor1}}
\ead{rafal.kapelko@pwr.edu.pl}
\fntext[pwrfootnote]{Research supported by grant nr 0401/0017/17.}
\cortext[cor1]{Corresponding author at: Department of Computer Science,
Faculty of Fundamental Problems of Technology, Wroc{\l}aw University of Science and Technology, }
\address[pwr]{ Department of Computer Science, Faculty of Fundamental Problems of Technology, Wroc{\l}aw University of Science and Technology, Poland}
\begin{abstract}
The present paper contains additional asymptotic result over an earlier investigation of Kapelko and Kranakis.

Consider $n$ mobile sensors placed independently at random with the uniform 
distribution on the unit interval $[0,1]$. 
Fix $a$  an odd natural number. 
Let $X_i$ be the the $i-$th closest sensor to $0$ on the interval $[0,1].$ Then  the following identity holds
$$\sum_{i=1}^n\mathbf{E}\left[\left|X_i-\left(\frac{i}{n}-\frac{1}{2n}\right)\right|^a\right]=\frac{\Gamma\left(\frac{a}{2}+1\right)}{2^{\frac{a}{2}}(1+a)}\frac{1}{n^{\frac{a}{2}-1}}+O\left(\frac{1}{n^{\frac{a-1}{2}}}\right),$$
where $\Gamma(z)$ is the Gamma function.

The result this paper with earlier investigation gives as full asymptotics results of 
expected sum of displacements to the power $a$ of all sensors to move from their current location to the anchor location
$\frac{i}{n}-\frac{1}{2n},$ for $i=1,2,\dots,n,$
when $a$ is natural number.
\end{abstract}
\begin{keyword}
analysis of algorithms, 
displacement, 
distance, 
random
\end{keyword}
\end{frontmatter}


\section{Motivation}
Consider $n$ mobile sensors placed independently at random with the uniform distribution on the unit interval $[0,1]$. 
The sensors are equipped with omnidirectional sensing antennas of identical range $r=\frac{1}{2n}$; thus a sensor placed at location $x$ in the unit interval 
can sense any point at distance at most $\frac{1}{2n}$ either to the left or right of $x$.
We are interested in moving the sensors from their initial positions to new locations so as to ensure that
the unit interval is covered, i.e., every point in the unit interval is within the range of a sensor
and the cost for displacement of the sensors is minimized.
Observe that the only way to attain the coverage is for the sensors to occupy the anchor location
$t_i=\frac{i}{n}-\frac{1}{2n},$ for $i=1,2,\dots ,n$ (see Algorithm \ref{alg_1}).
\begin{algorithm}
\caption{Moving of sensors}
\label{alg_1}
\begin{algorithmic}[1]
 \REQUIRE The initial location $X_1\le X_2\le\dots X_n$ of the $n$ sensors on the unit interval $[0,1].$
 \ENSURE  The final positions of the sensors are at the anchor points $\left(\frac{i}{n}-\frac{1}{2n}\right),\,\,$ $1\le i\le n.$
 \FOR{$i=1$  \TO $n$ } 
 \STATE{move the sensor $X_{i}$ at position $\left(\frac{i}{n}-\frac{1}{2n}\right)$}
 \ENDFOR
\end{algorithmic}
\end{algorithm}

Suppose the displacement of the $i-$th sensor is a distance
$d_i$, for $i=1,2,\ldots ,n$, 
Then the cost measure for the displacement of the whole system of $n$ sensors is $\sum_{i=1}^n d_i^a,$
for some constant $a>0.$ Motivation for this cost metric arises from the fact that there are obstacles in the environment
which can obstruct the sensor movement from their initial to their final destinations.

The Algorithm \ref{alg_1} was analysed for $a=1$ in \cite{spa_2013}.
In this paper the following result was proved.
The expected sum of displacement of all $n$ sensors to move from their current location to the equidistant anchor locations is in $\Theta(\sqrt{n}).$

Then, the paper \cite{kapelkokranakisIPL} contains the following precise asymptotic result.
\begin{theorem}[cf. \cite{kapelkokranakisIPL}]
\label{thm:IPL} 
The expected sum of displacement to the power $a$ of
Algorithm \ref{alg_1} is $$\frac{\left(\frac{a}{2}\right)!}{2^{\frac{a}{2}}(1+a)}\frac{1}{n^{\frac{a}{2}-1}}+
O\left(\frac{1}{n^{\frac{a}{2}}}\right),$$  when $a$ is even natural number.
\end{theorem}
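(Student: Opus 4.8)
The plan is to exploit the one structural simplification available precisely because $a$ is even: the absolute value is inert, so $\E{\left|X_i-t_i\right|^a}=\E{(X_i-t_i)^a}$ with $t_i=\frac{i}{n}-\frac{1}{2n}=\frac{2i-1}{2n}$. Since $X_1\le\dots\le X_n$ are the order statistics of $n$ i.i.d.\ uniform samples, $X_i\sim\BetaD{i}{n-i+1}$, and the target becomes a finite \emph{polynomial} functional of a Beta variable; there is no need to split the integral at $x=t_i$ into incomplete Beta pieces. This is exactly what makes the even case cleaner than the odd case treated in the body of the present paper, and it is the feature I would lean on throughout.

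First I would center at the mean $\mu_i=\E{X_i}=\frac{i}{n+1}$ by writing $X_i-t_i=(X_i-\mu_i)+\delta_i$ with $\delta_i=\mu_i-t_i=\frac{2i-n-1}{2n(n+1)}$, and expand
$$\E{(X_i-t_i)^a}=\sum_{k=0}^{a}\binom{a}{k}\delta_i^{\,k}\,\mu_i^{(a-k)},\qquad \mu_i^{(m)}:=\E{(X_i-\mu_i)^m},$$
where the central moments $\mu_i^{(m)}$ of $\BetaD{i}{n-i+1}$ are available in closed form (equivalently, one may keep the raw moments $\E{X_i^k}=\frac{\Gamma(n+1)\,\Gamma(i+k)}{\Gamma(i)\,\Gamma(n+1+k)}$ and expand $(X_i-t_i)^a=\sum_k\binom{a}{k}(-t_i)^{a-k}X_i^k$, which produces an exact rational expression in $i,n$ suited to a residue/closed-form evaluation of the $i$-sum). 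Because $\std{X_i}=\sqrt{\var{X_i}}$ is of order $n^{-1/2}$ in the bulk while $\delta_i=O(n^{-1})$ uniformly, the $k=0$ term, i.e.\ the pure even central moment $\mu_i^{(a)}$, will dominate.

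Next I would sum over $i$. Using $\var{X_i}=\frac{i(n-i+1)}{(n+1)^2(n+2)}$ together with the standard expansion $\mu_i^{(a)}=(a-1)!!\,\var{X_i}^{a/2}+\text{(lower order)}$ for even $a$, the leading contribution is
$$\sum_{i=1}^{n}(a-1)!!\left(\frac{i(n-i+1)}{(n+1)^2(n+2)}\right)^{a/2}.$$
Writing $i=xn$ and passing to the integral $\int_0^1\bigl(x(1-x)\bigr)^{a/2}\,dx=\EBeta{\tfrac a2+1}{\tfrac a2+1}=\frac{((a/2)!)^2}{(a+1)!}$, the prefactor collapses through $(a-1)!!=\frac{a!}{2^{a/2}(a/2)!}$ to
$$(a-1)!!\,\EBeta{\tfrac a2+1}{\tfrac a2+1}=\frac{(a/2)!}{2^{a/2}(1+a)},$$
carried by the factor $n^{1-a/2}$, which is precisely the claimed leading term. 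The even-order ($k$ even, $k\ge2$) corrections are immediately of order $n^{1-(a+k)/2}\le n^{-a/2}$; the borderline cross term $k=1$ is controlled by the near-symmetry of $\BetaD{i}{n-i+1}$, namely that the odd central moment $\mu_i^{(a-1)}$ is smaller than $\var{X_i}^{(a-1)/2}$ by a factor of order $n^{-1/2}$, which pushes $\sum_i\delta_i\mu_i^{(a-1)}$ into $O(n^{-a/2})$ as well.

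The delicate step is making the sum-to-integral passage rigorous \emph{uniformly} in $i$, in particular over the boundary ranges $i=O(1)$ and $n-i=O(1)$ where $\BetaD{i}{n-i+1}$ is far from Gaussian and the continuous/Euler--Maclaurin approximation degrades. I expect this to be handled exactly as in the even case of \cite{kapelkokranakisIPL}: retain the exact Beta moments (rational in $i,n$ via ratios of Gamma values), evaluate $\sum_{i=1}^n$ in closed form, and then extract the $\frac1n$-expansion with an explicit remainder, the endpoint terms being negligible because $\var{X_i}^{a/2}$ is tiny there. This confines every subleading contribution to $O\!\left(n^{-a/2}\right)$ and yields the stated identity.
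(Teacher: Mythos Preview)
Your argument is sound in outline and the leading-term computation is correct, but the route you take is genuinely different from the one the paper (via \cite{kapelkokranakisIPL}) uses.

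The paper's approach is purely algebraic: since $a$ is even, $\E{|X_i-t_i|^a}=i\binom{n}{i}\int_0^1(t_i-x)^a x^{i-1}(1-x)^{n-i}\,dx$; one expands $(t_i-x)^a$ binomially, evaluates each $\int_0^1 x^{i+j-1}(1-x)^{n-i}\,dx$ exactly via \eqref{Emult}, and obtains the closed form $\sum_{j=0}^a\binom{a}{j}(-1)^j n^j(i-\tfrac12)^{a-j}i^{\overline{j}}\big/\bigl(n^a(n+1)^{\overline{j}}\bigr)$. The whole analysis is then reduced to the even-$a$ analogue of Lemma~\ref{lem:sum1}: rewrite the inner sum as $\sum_k C_{a+1-k}\,n^{a+1-k}$ with each $C_{a+1-k}$ a polynomial in $j$ of degree $\le 2k$ (via Stirling numbers), and invoke Euler's Finite Difference Theorem \eqref{eq:triangle} to kill all coefficients with $2k<a$. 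No approximation, no sum-to-integral passage, and no separate treatment of boundary indices is needed; the error $O(n^{-a/2})$ falls out of the exact polynomial bookkeeping.

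Your primary route instead centers at $\mu_i=\E{X_i}$, appeals to the near-Gaussianity of $\BetaD{i}{n-i+1}$ to approximate $\mu_i^{(a)}\approx (a-1)!!\,\var{X_i}^{a/2}$, and replaces $\sum_i$ by $\int_0^1$. This is more probabilistic and arguably more transparent about \emph{why} the constant $\frac{(a/2)!}{2^{a/2}(1+a)}$ appears, and it would generalize more readily to non-integer $a$ or other order-statistic models. The cost is exactly the ``delicate step'' you flag: you must justify the Gaussian moment relation and the Riemann-sum replacement uniformly, with explicit $O(n^{-a/2})$ remainders, including at the edges $i=O(1)$ and $n-i=O(1)$. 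That is doable but is real work; the paper's combinatorial route sidesteps it entirely by never leaving exact identities. Your parenthetical suggestion (keep the raw Beta moments and evaluate the $i$-sum in closed form) is in fact precisely the paper's method.
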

In this paper we give an exact asymptotic on the expected minimum sum of displacement to the power $a,$ when $a$ is an odd natural number
of all sensors to move from their current random location on the unit interval to the anchor location
(see Theorem \ref{thm:mainexact_oddtwo1}).

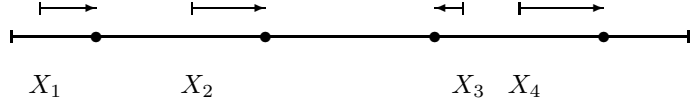
\begin{figure}[h]
\setlength{\unitlength}{0.75mm}
\centering
\begin{picture}(60,20)
\put(-30,10){\line(1,0){120}}
\put(-30,10){\line(0,1){1}}
\put(-30,10){\line(0,-1){1}}
\put(90,10){\line(0,1){1}}
\put(90,10){\line(0,-1){1}}
\put(-15,10){\circle*{2}}
\put(15,10){\circle*{2}}
\put(45,10){\circle*{2}}
\put(75,10){\circle*{2}}
\put(-27,0){$X_{1}$}
\put(-25,15){\line(0,1){1}}
\put(-25,15){\line(0,-1){1}}
\put(-25,15){\line(1,0){10}}
\put(-16,15){\vector(1,0){1}}
\put(0,0){$X_{2}$}
\put(2,15){\line(0,1){1}}
\put(2,15){\line(0,-1){1}}
\put(2,15){\line(1,0){12}}
\put(14,15){\vector(1,0){1}}
\put(48,0){$X_{3}$}
\put(50,15){\line(0,1){1}}
\put(50,15){\line(0,-1){1}}
\put(45,15){\line(1,0){5}}
\put(46,15){\vector(-1,0){1}}
\put(58,0){$X_{4}$}
\put(60,15){\line(0,1){1}}
\put(60,15){\line(0,-1){1}}
\put(60,15){\line(1,0){14}}
\put(74,15){\vector(1,0){1}}
\end{picture}
\caption{Four mobile sensors $X_1, X_2, X_3, X_4$ located on the unit interval move to the anchor points according to Algorithm \ref{alg_1}.}
\end{figure}
It is worthwhile to mention that, even though there is the closed formula on the expected sum of displacement to the power $a,$
when $a$ is an odd natural number,
the analysis of asymptotic is combinatorially challenging.

As another point of motivation for this sequel paper, we discover that the proof of the main result is a nontrivial extension.
Let us recall that the expected sum of displacement to the power $a$ of
Algorithm \ref{alg_1} is equal to
$$
\sum_{i=1}^ni\binom{n}{i}\int_{0}^{1}(t_i-x)^a x^{i-1}(1-x)^{n-i}dx,\,\,\,\text{when}\,\, a\,\, \text{is even natural number}.
$$
Then the proof of Theorem \ref{thm:IPL} is reduce to the analogue of Lemma \ref{lem:sum1}, when $a$ is even natural number
and some known identities.

Let $a$ be an odd natural number. In the paper we make an important observation 
(in the proof of Theorem \ref{thm:mainexact_oddtwo1})
that expected sum of displacement to the power $a$ of
Algorithm \ref{alg_1} is equal to
$$
\sum_{i=1}^n i\binom{n}{i}\int_{0}^{1}(x-t_i)^a x^{i-1}(1-x)^{n-i}dx
$$
$$
+\sum_{i=1}^n  2i\binom{n}{i}\int_{0}^{t_i}(t_i-x)^a x^{i-1}(1-x)^{n-i}dx.
$$
The first sum is negligible (see Lemma \ref{lem:sum1}). Thus the asymptotics depend on the expression given by the second sum.
Deriving the exact asymptotic of the second sum is not easy.
We need the sequence of technical lemmas (see Lemma \ref{lem:technical}--\ref{lem:technical3}) to prove the main result.

This sequel paper completes the previous paper \cite{kapelkokranakisIPL} and gives as full asymptotics results of 
expected sum of displacements to the power $a,$ when $a$ is natural number.
\section{Preliminaries}
In this subsection we introduce some basic concepts and notations
that will be used through the paper.

We will use many times the notations
for the rising and falling factorial \cite{concrete_1994}
$$n^{\overline{k}} = \begin{cases} 1 &\mbox{for } k=0 \\
n(n+1)\dots(n+k-1) & \mbox{for } k\ge 1, \end{cases}$$
$$n^{\underline{k}} = \begin{cases} 1 &\mbox{for } k=0 \\
n(n-1)\dots(n-(k-1)) & \mbox{for } k\ge 1. \end{cases}$$
Let $d,f$ be non-negative integers. Observe that
$$(i-1)^{\underline{d}}\cdot i^{\overline{f}}=
\frac{(i-1)^{\underline{d}}\cdot i^{\overline{f+1}}- (i-2)^{\underline{d}}\cdot (i-1)^{\overline{f+1}}}{f+d+1}.$$
Hence, by telescoping we derive
\begin{equation}
\label{eq:identity}
\sum_{i=1}^n(i-1)^{\underline{d}}\cdot i^{\overline{f}} =\frac{1}{f+d+1}(n-1)^{\underline{d}}\cdot n^{\overline{f+1}}.
\end{equation}
Notice that
\begin{equation}
\label{eq:identitysum}
\sum_{k=1}^n k^f =\frac{1}{f+1}n^{f+1}+\sum_{l=0}^{f}c_ln^l,
\end{equation}
where $c_l$ are some constans independent on $n$ (see \cite[Formula (6.78)]{concrete_1994}).

Let ${ n\brack k},$  ${n\brace k}$ be the Stirling numbers of the first and second kind respectively, which are
defined for all integer numbers such that $0\le k \le n.$ 

The following three basic equations involving Stirling numbers of the first and second kind, rising and falling factorial will be used in the proofs
(see   
\cite[Identity 6.11]{concrete_1994},
\cite[Identity 6.13]{concrete_1994}
and \cite[Identity 6.10]{concrete_1994}):
\begin{equation}
\label{eq:stirling3}
x^{\overline{m}}=\sum_{l_2}{ m\brack l_2}x^{l_2},
\end{equation}
\begin{equation}
\label{eq:stirling2}
x^{\underline{m}}=\sum_{l}{ m\brack l}(-1)^{m-l}x^{l},
\end{equation}
\begin{equation}
\label{eq:stirling}
x^m=\sum_{l}{m\brace l}x^{\underline{l}}.
\end{equation}
Assume that $b$ be is a constant independent of $m.$ Then the following Stirling numbers
\begin{equation}
\label{eq:stirling_since} 
{ m\brack m-b},\,\,\,{ m+b\brack m}\,\,\,{ m\brace m-b},\,\,\,{ m+b\brace m}\,\,\,
\end{equation}
are polynomials in the variable $n$ and of degree $2b$ (see \cite{concrete_1994}).

Let $\left\langle\left\langle n\atop k\right\rangle\right\rangle$   be the Eulerian numbers, which are
defined for all integer numbers such that $0\le k \le n.$ 
The following three identities for Euler numbers will be helpful in the deriving exact asymptotic results
(see Identities $(6.43),$ $(6.44)$ and $(6.42)$ in \cite{concrete_1994}):
\begin{equation}
\label{eq:euler2}
{m\brace m-b}=\sum_{l}\left\langle\left\langle b\atop l\right\rangle\right\rangle\binom{m+b-1-l}{2b},
\end{equation}
\begin{equation}
\label{eq:euler3}
{m\brack m-b}=\sum_{l}\left\langle\left\langle b\atop l\right\rangle\right\rangle\binom{m+l}{2b},
\end{equation}
\begin{equation}
\label{eq:euler1}
\sum_{l}\left\langle\left\langle m\atop l\right\rangle\right\rangle=\frac{(2m)!}{(m)!}\frac{1}{2^m}.
\end{equation}
We will use Euler's Finite Difference Theorem (see \cite[Identity 6.16]{Gould}). \footnote{We note that Equation (\ref{eq:triangle}) is crucial to derive the main result.}  
Assume that $a$ is a natural number. Let $f(j)=j^m$ and $m\in \mathbb{N}.$  Then
\begin{equation}
\label{eq:triangle}
\sum_{j=0}^{a}\binom{a}{j}(-1)^jf(j)=\begin{cases} 0 &\mbox{if } m <a \\
(-1)^{a}a! & \mbox{if } m= a. \end{cases}
\end{equation}
The Gamma function  (see \cite{NIST}) is defined as a definite integral for $z>0$
$$
\Gamma(z)=\int_{0}^{\infty}t^{z-1}e^{-t}dt.
$$
It is related to factorial for $n\in\mathbb{N}$
$$\Gamma(n+1)=n!.$$ 

The Beta function (see \cite{NIST}) is defined as a improper integral for positive numbers $c,d$ 
\begin{equation}
\label{eq:beta}
\EBeta{c}{d} = \int_0^1 x^{c-1}(1-x)^{d-1} dx.
\end{equation}
It is related to binomial for positive integer numbers $c,d$
\begin{equation}
\label{Emult}
\EBeta{c}{d}=\frac{1}{\binom{c+d-1}{c}c}.
\end{equation}
The Beta distribution (see \cite{NIST}) with parameters $c,d>0$  is the continuous distribution on $[0,1].$ 
The probability density function $P(x)$ and distribution function  $I(z;c,d)$ are given by
\begin{equation}
\label{eq:prosto}
P(x)=\frac{x^{c-1}(1-x)^{d-1}}{\EBeta{c}{d}},
\end{equation}
\begin{equation}
\label{incomplete:first}
I(z;c,d)=\frac{1}{\EBeta{c}{d}}\int_{0}^{z}x^{c-1}(1-x)^{d-1}dx.
\end{equation}
Notice that
\begin{equation}
\label{probal_eq}
I(z;c,d)\le 1,\,\,\text{when}\,\,\,\,c,d> 0\,\,\text{and}\,\,0\le z\le 1.
\end{equation}
Using integration by parts we can derive the following identity 
\begin{equation}
\label{incomplete:req}
I(z;c,d)=I(z;c-1,d)-\frac{\Gamma(c+d-1)}{\Gamma(c)\Gamma(d)}z^{c-1}(1-z)^d.
\end{equation}
(see also \cite[Identity 8.17.20]{NIST}).

The following binomial identity (see \cite[Identity 7.11]{Gould})
\begin{equation}
\label{eq:gould100}
\sum_{b=0}^{a}\binom{\frac{a-1}{2}}{b}(-1)^b\frac{1}{2b+1}=\frac{\sqrt{\pi}\left(\frac{a-1}{2}\right)!}{2\Gamma\left(\frac{a}{2}+1\right)}
\,\,\,\text{if}\,\,\, a \equiv 1  \pmod{2}.
\end{equation}
will be used in the proof of technical Lemma \ref{lem:technical2}.

We will use tow forms of Stirling's formula (see \cite[page 54]{feller1968}, \cite[Formula 9.40]{concrete_1994}) in the asymptotic
analysis of the sum (\ref{eq:szp}) in Lemma \ref{lem:technical3}:
\begin{equation}
\label{eq:stirlingform}
\sqrt{2\pi}m^{m+\frac{1}{2}}e^{-m+\frac{1}{12m+1}}< m!< \sqrt{2\pi}m^{m+\frac{1}{2}}e^{-m+\frac{1}{12m}},
\end{equation}
\begin{equation}
\label{eq:stirling2form}
m!= \sqrt{2\pi}m^{m+\frac{1}{2}}e^{-m}\left(1+O\left(\frac{1}{m}\right)\right).
\end{equation}
\section{Main result}
Fix $a$ an odd natural number.
This section is devoted to asymptotic analysis of Algorithm \ref{alg_1}, when
the cost of displacement of the $i-$th sensor to the anchor point $t_i=\frac{i}{n}-\frac{1}{2n},$ raised to the $a-$th power.
We prove the following theorem.
\begin{theorem} 
\label{thm:mainexact_oddtwo1} 
Let $a$ be an odd natural number. Assume that $n$ mobile sensors are thrown uniformly and independently at random in the unit interval. The expected sum
over all sensors $i$ from $1$ to $n,$ where the contribution of the $i-$th sensor is its displacement from the current location to the anchor point
$t_i=\frac{i}{n}-\frac{1}{2n},$ raised to the $a-$th power
is
$$\frac{\Gamma\left(\frac{a}{2}+1\right)}{2^{\frac{a}{2}}(1+a)}\frac{1}{n^{\frac{a}{2}-1}}+O\left(\frac{1}{n^{\frac{a-1}{2}}}\right).$$
\end{theorem}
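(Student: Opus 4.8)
The plan is to start from the exact combinatorial formula for the expected sum of displacements to the $a$-th power and split it, as announced in the introduction, into two pieces according to whether a sensor lands to the right or to the left of its anchor. Concretely, since $|X_i-t_i|^a = (X_i-t_i)^a$ when $X_i\ge t_i$ and $|X_i-t_i|^a=(t_i-X_i)^a$ when $X_i<t_i$, and since $(x-t_i)^a=-(t_i-x)^a$ for odd $a$, I would write the target sum as
$$
\sum_{i=1}^n i\binom{n}{i}\int_{0}^{1}(x-t_i)^a x^{i-1}(1-x)^{n-i}\,dx
+\sum_{i=1}^n 2i\binom{n}{i}\int_{0}^{t_i}(t_i-x)^a x^{i-1}(1-x)^{n-i}\,dx.
$$
The first term is handled by Lemma~\ref{lem:sum1} (the analogue for odd $a$ of the even-$a$ computation used in Theorem~\ref{thm:IPL}), which shows it is $O\!\left(n^{-(a-1)/2}\right)$ — i.e. it contributes only to the error term. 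So the entire asymptotic content sits in the second sum, call it $S_n$.

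Next I would expand $(t_i-x)^a=\sum_{k=0}^a\binom{a}{k}t_i^{a-k}(-x)^k$ inside $S_n$ and reduce each resulting integral $\int_0^{t_i}x^{i-1+k}(1-x)^{n-i}\,dx$ to incomplete Beta functions $I(t_i;\cdot,\cdot)$ via~(\ref{incomplete:first}) and~(\ref{Emult}), then use the recurrence~(\ref{incomplete:req}) repeatedly to strip off the polynomial correction terms of the form $t_i^{c-1}(1-t_i)^d$. This is precisely the bookkeeping that Lemmas~\ref{lem:technical}--\ref{lem:technical3} are designed to carry out: one isolates a "main" collection of terms whose $i$-sum can be evaluated in closed form using the falling/rising factorial identity~(\ref{eq:identity}), the power-sum identity~(\ref{eq:identitysum}), and the Stirling-number relations~(\ref{eq:stirling3})--(\ref{eq:euler1}), while the remaining terms are shown to be of lower order. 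The binomial identity~(\ref{eq:gould100}) together with Euler's finite difference theorem~(\ref{eq:triangle}) is what makes the leading coefficient collapse to $\frac{\sqrt\pi\,((a-1)/2)!}{2\Gamma(a/2+1)}$-type constants; matching these with the factor $\Gamma(a/2+1)$ in the claimed answer is the place where the specific form of the constant $\frac{\Gamma(a/2+1)}{2^{a/2}(1+a)}$ emerges. Finally, Stirling's formula in the forms~(\ref{eq:stirlingform})--(\ref{eq:stirling2form}) is invoked to pass from the exact-in-$n$ expression produced by the $i$-summation to the stated asymptotic expansion with error $O\!\left(n^{-(a-1)/2}\right)$, controlling in particular the tail where $t_i$ is bounded away from $0$ (there $I(t_i;c,d)$ is exponentially close to $1$ by~(\ref{probal_eq}) and the recurrence, so those terms assemble into the clean power sum).

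The main obstacle, as the authors themselves flag, is the middle step: after the binomial expansion of $(t_i-x)^a$ and the repeated application of~(\ref{incomplete:req}), one is left with a doubly (in fact triply, once Stirling numbers enter) indexed sum of products of falling/rising factorials in $i$ times powers of $t_i=(2i-1)/(2n)$, and the challenge is to identify exactly which terms contribute to the order-$n^{1-a/2}$ main term versus the order-$n^{(1-a)/2}$ error. The degree counting relies crucially on the fact~(\ref{eq:stirling_since}) that the relevant Stirling numbers ${m\brack m-b}$, ${m\brace m-b}$ are polynomials of degree $2b$, so that when summed against $i^{c}$ and divided by $n^{a}$ from the $t_i^{a-k}$ factors, the net power of $n$ can be tracked. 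Keeping this degree bookkeeping correct through the chain of Lemmas~\ref{lem:technical}--\ref{lem:technical3}, and verifying that the finite-difference cancellation~(\ref{eq:triangle}) kills exactly the spurious lower-degree-in-$m$ contributions, is the combinatorially delicate heart of the argument; once that is in place, the assembly into the final formula is routine.
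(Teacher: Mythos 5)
Your plan is correct and follows essentially the same route as the paper: the same odd-$a$ decomposition $|x-t_i|^a=(x-t_i)^a+2(t_i-x)^a\mathbf{1}_{\{x<t_i\}}$ leading to Equations (\ref{eq:latwe})--(\ref{eq:trudne}), with Lemma \ref{lem:sum1} disposing of the first sum, the incomplete-Beta recurrence (\ref{incomplete:req}) splitting the second into the pieces handled by Lemmas \ref{lem:technical}--\ref{lem:technical2}, and Lemma \ref{lem:technical3} (via Stirling's formula) supplying the final asymptotics and the constant $\frac{\Gamma\left(\frac{a}{2}+1\right)}{2^{\frac{a}{2}}(1+a)}$. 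The only minor imprecision is attributing Stirling's formula to controlling the tail of $I(t_i;c,d)$, whereas it is used to evaluate the sums $\sum_i 2i\binom{n}{i}(1-t_i)^{n-i+1}t_i^{i+c}$; this does not affect the correctness of the outline.
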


We begin with the following sequences of lemmas which will be helpful in the proof of Theorem \ref{thm:mainexact_oddtwo1}.

First we prove the following lemma which is the analogue of lemma $1$ in \cite{kapelkokranakisIPL}. We need only the upper bound and the proof is simpler. 
The strategy of our proof is the following. We rewrite the inner sum 
$\sum_{i=1}^{n}\frac{1}{n^a}\binom{a}{j}(-1)^jn^j  \frac{ \left(i-\frac{1}{2}\right)^{a-j}\cdot i^{\overline{j}}}{(n+1)^{\overline{j}}}$
as the sum $\sum_{k=0}^{a+1}C_{a+1-k}\,\, n^{a+1-k},$ where $C_{a+1-k}$ is the polynomial of variable $j$ of degree less than or equal $2k.$
Then, we apply Euler's finite difference theorem to prove Equation (\ref{eq:desired100}),
giving the desired upper bound.
\begin{Lemma}
\label{lem:sum1}
Assume that $a$ is an odd natural number. Then
$$
\sum_{j=0}^{a}\sum_{i=1}^{n}\frac{1}{n^a}\binom{a}{j}(-1)^jn^j  \frac{ \left(i-\frac{1}{2}\right)^{a-j}\cdot i^{\overline{j}}}{(n+1)^{\overline{j}}}=
O\left(\frac{1}{n^{\frac{a-1}{2}}}\right). 
$$
\end{Lemma}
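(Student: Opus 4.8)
The plan is to evaluate, for each fixed $j$, the inner sum over $i$ in closed form as a polynomial in $n$ whose coefficients are polynomials in $j$ of tightly controlled degree, and then to exploit a massive cancellation in the outer sum over $j$ by means of Euler's finite difference theorem (\ref{eq:triangle}).

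Fix $j$. By (\ref{eq:stirling3}) one has $i^{\overline{j}}=\sum_{l_2}{j\brack l_2}i^{l_2}$, and by the binomial theorem $\left(i-\frac{1}{2}\right)^{a-j}=\sum_{q}\binom{a-j}{q}\left(-\frac{1}{2}\right)^q i^{a-j-q}$; hence $\left(i-\frac{1}{2}\right)^{a-j}i^{\overline{j}}$ is an explicit polynomial in $i$ of degree $a$ in which the coefficient of $i^{a-k}$ equals $\sum_{p+q=k}{j\brack j-p}\binom{a-j}{q}\left(-\frac{1}{2}\right)^q$. Since ${j\brack j-p}$ is a polynomial in $j$ of degree $2p$ by (\ref{eq:stirling_since}) and $\binom{a-j}{q}$ has degree $q$, while $2p+q\le 2k$ whenever $p+q=k$, this coefficient is a polynomial in $j$ of degree at most $2k$. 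Summing over $i$ by Faulhaber's formula (\ref{eq:identitysum}), whose subleading coefficients depend only on the exponent and not on $j$, therefore gives
$$\sum_{i=1}^n\left(i-\frac{1}{2}\right)^{a-j}i^{\overline{j}}=\sum_{k=0}^{a+1}A_{a+1-k}(j)\,n^{a+1-k},\qquad\deg_j A_{a+1-k}\le 2k .$$

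Next I absorb the remaining factor $\dfrac{n^j}{(n+1)^{\overline{j}}}$. Writing $(n+1)^{\overline{j}}=\prod_{l=1}^{j}(n+l)=n^j\bigl(1+\sum_{p\ge1}e_p(1,\dots,j)\,n^{-p}\bigr)$, the coefficients $e_p(1,\dots,j)={j+1\brack j+1-p}$ are polynomials in $j$ of degree $2p$ by (\ref{eq:stirling_since}), so inverting the bracket yields $\dfrac{n^j}{(n+1)^{\overline{j}}}=\sum_{r\ge0}\tau_r(j)\,n^{-r}$ with $\tau_0=1$ and $\tau_r=-\sum_{p=1}^{r}e_p(1,\dots,j)\tau_{r-p}$, whence each $\tau_r$ is a polynomial in $j$ of degree at most $2r$. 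Multiplying the preceding display by this expansion, reinstating the factor $\frac{1}{n^a}$, and collecting powers of $n$, one obtains, for any fixed integer $K$,
$$\frac{1}{n^a}\sum_{i=1}^n\binom{a}{j}(-1)^j n^j\,\frac{\left(i-\frac{1}{2}\right)^{a-j}i^{\overline{j}}}{(n+1)^{\overline{j}}}=\binom{a}{j}(-1)^j\sum_{k=0}^{K}C_{a+1-k}(j)\,n^{1-k}+O\!\left(n^{-K}\right),$$
where $C_{a+1-k}(j)=\sum_{r=0}^{k}\tau_r(j)A_{a+1-k+r}(j)$ is a polynomial in $j$ of degree at most $2k$ (the two degrees $2r$ and $2(k-r)$ add up to $2k$), and the $O$-term, which collects the truncation remainder together with the terms of index larger than $K$, is uniform over $j\in\{0,\dots,a\}$. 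Summing over $j$ and taking $K=\tfrac{a+1}{2}$, the left-hand side of the lemma equals
$$\sum_{k=0}^{\frac{a+1}{2}}n^{1-k}\sum_{j=0}^{a}\binom{a}{j}(-1)^j C_{a+1-k}(j)+O\!\left(n^{-\frac{a+1}{2}}\right).$$
By Euler's finite difference theorem (\ref{eq:triangle}), the inner alternating sum $\sum_{j=0}^{a}\binom{a}{j}(-1)^j C_{a+1-k}(j)$ vanishes whenever $\deg_j C_{a+1-k}<a$, that is, whenever $2k<a$, equivalently (as $a$ is odd) for $k\le\tfrac{a-1}{2}$. Consequently every term with $k\le\tfrac{a-1}{2}$ drops out, the only surviving term $k=\tfrac{a+1}{2}$ has size $n^{1-\frac{a+1}{2}}=n^{-\frac{a-1}{2}}$ times a constant, and the claimed bound $O\!\left(n^{-\frac{a-1}{2}}\right)$ follows.

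The only delicate point, and the step I expect to be the main obstacle, is the degree bookkeeping in $j$: the argument works precisely because the coefficient of every relevant power of $n$ turns out to be a genuine polynomial in $j$ of degree strictly below $a$, so that (\ref{eq:triangle}) may be applied termwise. Carrying out the $i$-summation through Faulhaber's formula (\ref{eq:identitysum}) rather than through the telescoping identity (\ref{eq:identity}) is exactly what keeps the $j$-dependence polynomial; identity (\ref{eq:identity}) would introduce factors $\tfrac{1}{j+d+1}$ into the coefficients, which are not polynomial in $j$ and would obstruct the application of Euler's finite difference theorem.
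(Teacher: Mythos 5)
Your argument is correct, and its engine is exactly the one the paper uses: write the inner sum, for fixed $j$, as $\sum_k C_{a+1-k}(j)\,n^{a+1-k}$ with $\deg_j C_{a+1-k}\le 2k$, and then kill every term with $2k<a$ by Euler's finite difference theorem (\ref{eq:triangle}). Where you differ is in how you produce that representation: you expand $\left(i-\frac12\right)^{a-j}i^{\overline{j}}$ in plain powers of $i$, sum with Faulhaber's formula (\ref{eq:identitysum}), and then absorb $n^{j}/(n+1)^{\overline{j}}$ by inverting $1+\sum_p {j+1\brack j+1-p}n^{-p}$ as an asymptotic series with a uniform truncation error, whereas the paper converts $(i-1)^{a-j-l_1}$ to falling factorials and uses the telescoping identity (\ref{eq:identity}), for which the factor $n^{\overline{j+1}}/(n+1)^{\overline{j}}=n$ cancels the denominator exactly, so its representation is an exact polynomial identity in $n$ with no remainder to control; your route trades that exactness for a slightly more elementary computation plus a (harmless, since $j\le a$ is bounded) uniformity check on the $O(n^{-K})$ term. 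Your degree bookkeeping is sound in both steps ($\deg_j\le 2p+q\le 2k$ for the $i$-coefficients, $\deg_j\tau_r\le 2r$ by induction, hence $\deg_j C_{a+1-k}\le 2k$), so the conclusion follows as you state. One small correction to your closing remark: the factors $\frac{1}{l_2+j+1}$ produced by (\ref{eq:identity}) do \emph{not} obstruct the paper's use of (\ref{eq:triangle}); once one restricts to the coefficient of a fixed power $n^{a+1-k}$, the index constraint forces $l_2=a-j+l_4-k$, so $\frac{1}{l_2+j+1}=\frac{1}{a+1-k+l_4}$ is independent of $j$ and the coefficient remains a polynomial in $j$ of degree at most $2k$ — this is precisely how the paper's proof proceeds.
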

\begin{proof}
Let $j\in\{0,\dots, a\}.$
Applying Equation (\ref{eq:stirling}), Equation (\ref{eq:stirling2})  as well Equation (\ref{eq:identity}) we deduce that
\begin{align*}
&\sum_{i=1}^{n} n^j  \frac{ \left(i-\frac{1}{2}\right)^{a-j}\cdot i^{\overline{j}}}{(n+1)^{\overline{j}}}\\
&=
n^j\sum_{l_1}\binom{a-j}{l_1}\left(\sum_{i=1}^{n}\frac{(i-1)^{a-j-l_1}\left(\frac{1}{2}\right)^{l_1}i^{\overline{j}}}{(n+1)^{\overline{j}}}\right)\\
&=
n^j\sum_{l_1}\binom{a-j}{a-j-l_1}\left(\frac{1}{2}\right)^{l_1}\sum_{l_2}{a-j-l_1\brace l_2}  \left(\sum_{i=1}^{n}\frac{(i-1)^{\underline{l_2}}i^{\overline{j}}}{(n+1)^{\overline{j}}}\right)\\
&=n^j\sum_{l_1}\binom{a-j}{a-j-l_1}\left(\frac{1}{2}\right)^{l_1}\sum_{l_2}{a-j-l_1\brace l_2} \frac{1}{l_2+j+1}n^{\underline{l_2+1}}\\
&=\sum_{l_1}\binom{a-j}{a-j-l_1}\left(\frac{1}{2}\right)^{l_1}\\
&\times\left(\sum_{l_2}\sum_{l_3}{a-j-l_1\brace l_2}{l_2+1\brack l_3}(-1)^{l_2+1-l_3} \frac{1}{l_2+j+1}n^{l_3+j}\right).
\end{align*}
Hence $$\sum_{i=1}^{n} n^j  \frac{ \left(i-\frac{1}{2}\right)^{a-j}\cdot i^{\overline{j}}}{(n+1)^{\overline{j}}}=\sum_{k=0}^{a+1}C_{a+1-k}\,\, n^{a+1-k}$$ and
\begin{align*}
C&_{a+1-k}\\
&=\sum_{l_1}\binom{a-j}{a-j-l_1}\left(\frac{1}{2}\right)^{l_1}\\
&\times\sum_{l_2}{a-j-l_1\brace l_2}{l_2+1\brack a+1-k-j}(-1)^{l_2-a+k+j} \frac{1}{l_2+j+1}\\
&=\sum_{l_1}\binom{a-j}{a-j-l_1}\left(\frac{1}{2}\right)^{l_1}\\
&\times\sum_{l_4}{a-j-l_1\brace a-j+l_4-k}{a-j+l_4-k+1\brack a+1-k-j}\frac{(-1)^{l_4}}{a+1-k+l_4}.
\end{align*}
Since $\binom{a-j}{a-j-l_1}$ is the polynomial of variable $a-j$ of degree $l_1,$
${a-j-l_1\brace a-j+l_4-k}$ is the polynomial of variable $a-j-l_1$ of degree $2(k-l_1-l_4)$ (see (\ref{eq:stirling_since}))
and ${a-j+l_4-k+1\brack a+1-k-j}$ is the polynomial of variable $a+1-j-k$ of degree $2l_4$ (see (\ref{eq:stirling_since})),
we obtain that, $C_{a+1-k}$ is the polynomial of variable $a-j$ of degree less than or equal $2k.$ 
Hence from Equation (\ref{eq:triangle}) we get
\begin{equation}
\label{eq:desired100}
\sum_{j=0}^{a}\binom{a}{j}(-1)^jC_{a+1-k}= 0\,\, \text{for }\,\, 2k <a. 
\end{equation}
Putting everything together, we obtain
\begin{align*}
\sum_{j=0}^{a}&\sum_{i=1}^{n}\frac{1}{n^a}\binom{a}{j}(-1)^jn^j  \frac{ \left(i-\frac{1}{2}\right)^{a-j}\cdot i^{\overline{j}}}{(n+1)^{\overline{j}}}\\
=&\sum_{k=0}^{a+1}\frac{n^{a+1-k}}{n^a}\sum_{j=0}^{a}\binom{a}{j}(-1)^j\, C_{a+1-k}\\
=&\sum_{k=\frac{a+1}{2}}^{a+1}\frac{n^{a+1-k}}{n^a}\sum_{j=0}^{a}\binom{a}{j}(-1)^j\, C_{a+1-k}\\
=& O\left(\frac{1}{n^{\frac{a-1}{2}}}\right).
\end{align*}
This completes the proof of Lemma \ref{lem:sum1}.
\end{proof}
We will prove the following technical Lemma \ref{lem:technical}. 

The general strategy of our proof is to rewrite the expression 
$n^j\left(i-\frac{1}{2}\right)^{a-j}i^{\overline{j}}(n+a)^{\underline{a-j}}$
as the sum
$\sum_{0\le l_1,l_2\le a}a_{l_1, l_2}(j)n^{a-l_1}i^{a-l_2},$
where $a_{l_1, l_2}(j)$ are the polynomials of variable $j$ of degree less than or equal $2l_1+2l_2.$
Then, we combine Euler's finite difference theorem, the probabilistic inequality (\ref{eq:02star}) and Equation
(\ref{eq:identity}) and get the desired upper bound.
\begin{Lemma}
\label{lem:technical}
Assume that $a$ is an odd natural number. Let $t_i=\frac{i}{n}-\frac{1}{2n}$ and let
$$A_{i}^{(a)}=\frac{1}{n^a(n+1)^{(a)}}\sum_{j=0}^{a}\binom{a}{j}(-1)^jn^j\left(i-\frac{1}{2}\right)^{a-j}i^{\overline{j}}(n+a)^{\underline{a-j}}.$$ Then
$$\sum_{i=1}^{n}A_{i}^{(a)}\binom{n}{i}i\int_{0}^{t_i}x^{i-1}(1-x)^{n-i}dx=O\left(\frac{1}{n^{\frac{a-1}{2}}}\right).$$ 
\end{Lemma}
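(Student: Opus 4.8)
The plan is to follow the ``rewriting'' strategy sketched just before the statement of Lemma~\ref{lem:technical}, and then reduce the estimate to an application of Euler's Finite Difference Theorem~(\ref{eq:triangle}) together with the probabilistic bound~(\ref{probal_eq}) and the telescoping identity~(\ref{eq:identity}). First I would expand each of the three factors depending on $i$ in the definition of $A_i^{(a)}$. Writing $i-\tfrac12=(i-1)+\tfrac12$ and using the binomial theorem, $\left(i-\frac{1}{2}\right)^{a-j}=\sum_{p}\binom{a-j}{p}\left(\frac12\right)^{a-j-p}(i-1)^{p}$; similarly $i^{\overline{j}}$ and $(n+a)^{\underline{a-j}}$ are polynomials in $i$ (resp.\ in $n$) of degrees $j$ and $a-j$. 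Multiplying these out, together with the explicit power $n^{j}$, one obtains
$$
n^j\left(i-\tfrac12\right)^{a-j}i^{\overline{j}}(n+a)^{\underline{a-j}}=\sum_{0\le l_1,l_2\le a}a_{l_1,l_2}(j)\,n^{a-l_1}i^{a-l_2},
$$
where, by tracking the degrees of the binomial coefficients and of the falling/rising factorials as in the proof of Lemma~\ref{lem:sum1}, each coefficient $a_{l_1,l_2}(j)$ is a polynomial in $j$ of degree at most $2l_1+2l_2$. (The $\tfrac12$'s and the constant $a$ inside $(n+a)^{\underline{a-j}}$ only contribute bounded factors, so they do not affect the degree count.)

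Next I would substitute this expansion into the quantity to be estimated. After dividing by $n^a(n+1)^{(a)}$ and interchanging the finite sums, the left-hand side becomes
$$
\sum_{0\le l_1,l_2\le a}\frac{n^{a-l_1}}{n^a(n+1)^{(a)}}\left(\sum_{j=0}^{a}\binom{a}{j}(-1)^j a_{l_1,l_2}(j)\right)\sum_{i=1}^{n}i^{a-l_2}\binom{n}{i}i\int_{0}^{t_i}x^{i-1}(1-x)^{n-i}dx.
$$
The crucial point is the inner $j$-sum: since $a_{l_1,l_2}(j)$ has degree at most $2l_1+2l_2$ in $j$, Equation~(\ref{eq:triangle}) forces $\sum_{j=0}^{a}\binom{a}{j}(-1)^j a_{l_1,l_2}(j)=0$ whenever $2l_1+2l_2<a$, i.e.\ whenever $l_1+l_2<\frac{a}{2}$, hence (as $a$ is odd and $l_1+l_2$ is an integer) whenever $l_1+l_2\le\frac{a-1}{2}$. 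So only the terms with $l_1+l_2\ge\frac{a+1}{2}$ survive.

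For each surviving term I would bound the remaining $i$-sum. Using $\binom{n}{i}i\le n^{\underline{i}}/(i-1)!$ and absorbing $i^{a-l_2}$, one sees $\int_0^{t_i}x^{i-1}(1-x)^{n-i}dx\le\EBeta{i}{n-i+1}\,I(t_i;i,n-i+1)\le\EBeta{i}{n-i+1}$ by~(\ref{probal_eq}) and~(\ref{incomplete:first}); together with~(\ref{Emult}) this makes $\binom{n}{i}i\int_0^{t_i}\le 1$, so $\sum_{i=1}^{n}i^{a-l_2}\binom{n}{i}i\int_0^{t_i}x^{i-1}(1-x)^{n-i}dx=O(n^{a-l_2+1})$ (crudely), or more precisely $O(n^{a-l_2})$ after noting $\sum_i i^{a-l_2}\cdot(\text{something summing to }O(1))$ is dominated by $n^{a-l_2}$ via~(\ref{eq:identitysum}); alternatively one invokes~(\ref{eq:identity}) directly after writing $i^{a-l_2}$ in terms of $(i-1)^{\underline{\cdot}}$. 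Since $(n+1)^{(a)}=\Theta(n^a)$, the surviving term is
$$
O\!\left(\frac{n^{a-l_1}}{n^a\cdot n^a}\cdot n^{a-l_2}\right)=O\!\left(\frac{1}{n^{l_1+l_2}}\right)=O\!\left(\frac{1}{n^{(a+1)/2}}\right)=O\!\left(\frac{1}{n^{(a-1)/2}}\right),
$$
and summing the $O(1)$ many surviving terms preserves the bound. The main obstacle I anticipate is purely bookkeeping: making the degree claim ``$\deg_j a_{l_1,l_2}(j)\le 2l_1+2l_2$'' airtight requires carefully combining the degree of $\binom{a-j}{p}$ in $a-j$, the Stirling-number expansions~(\ref{eq:stirling3})--(\ref{eq:stirling}) for $i^{\overline{j}}$ and for converting powers of $n$ arising from $(n+a)^{\underline{a-j}}$, and the degree bounds~(\ref{eq:stirling_since}); one must verify that the exponent $a-l_1$ of $n$ and $a-l_2$ of $i$ are correctly paired with the $j$-degree $2l_1+2l_2$, exactly as was done for the single-index coefficients $C_{a+1-k}$ in Lemma~\ref{lem:sum1}. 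Once that is in place, the Euler-difference cancellation and the $I(z;c,d)\le1$ bound do the rest.
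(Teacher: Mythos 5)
Your proposal follows essentially the same route as the paper's proof: expand $n^j\left(i-\frac{1}{2}\right)^{a-j}i^{\overline{j}}(n+a)^{\underline{a-j}}$ as $\sum_{l_1,l_2}a_{l_1,l_2}(j)\,n^{a-l_1}i^{a-l_2}$ with $a_{l_1,l_2}(j)$ a polynomial in $j$ of degree at most $2l_1+2l_2$, kill the terms with $l_1+l_2\le\frac{a-1}{2}$ by Euler's finite difference theorem (\ref{eq:triangle}), bound $\binom{n}{i}i\int_0^{t_i}x^{i-1}(1-x)^{n-i}dx\le 1$ via (\ref{probal_eq}), and sum $i^{a-l_2}$ by (\ref{eq:identitysum}); this is exactly how the paper argues. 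Two details need repair, though. First, the expansion you wrote, $\left(i-\frac{1}{2}\right)^{a-j}=\sum_{p}\binom{a-j}{p}\left(\frac{1}{2}\right)^{a-j-p}(i-1)^{p}$, places the $j$-dependent exponent on the factor $\frac{1}{2}$; carried out literally, the resulting coefficients contain $2^{\,j}$, which is not a polynomial in $j$ (indeed $\sum_{j}\binom{a}{j}(-1)^j2^j=(-1)^a\neq 0$), so the Euler cancellation could not be applied to them, and your parenthetical that the $\frac{1}{2}$'s are ``bounded factors'' and hence harmless is not a valid justification. The paper instead keeps the power of $\frac{1}{2}$ at the fixed summation index and lets the $j$-dependent exponents cancel in pairs --- between $\left(i-\frac{1}{2}\right)^{a-j}$ and $i^{\overline{j}}$ on the $i$-side, and between $n^j$ and $(n+a)^{\underline{a-j}}$ on the $n$-side --- which is what makes the coefficient of each fixed monomial $n^{a-l_1}i^{a-l_2}$ a genuine polynomial in $j$ of degree at most $2l_1+2l_2$; your degree claim is true, but only with this pairing. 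Second, your ``more precise'' bound $O(n^{a-l_2})$ for the $i$-sum is false: the quantities $\binom{n}{i}i\int_0^{t_i}x^{i-1}(1-x)^{n-i}dx=I(t_i;i,n-i+1)$ are each roughly $\frac{1}{2}$ for mid-range $i$, so they sum to $\Theta(n)$, not $O(1)$. This does not matter, however: your ``crude'' bound $O(n^{a-l_2+1})$ is what the paper uses, and with $n^a(n+1)^{\overline{a}}=\Theta(n^{2a})$ it gives a surviving term of order $n^{1-l_1-l_2}\le n^{-\frac{a-1}{2}}$, which is exactly the claimed estimate.
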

\begin{proof}  
Observe that  $$A_{i}^{(a)}=\frac{1}{n^a(n+1)^{(a)}}\sum_{0\le l_1,l_2\le a}\sum_{j=0}^{a}\binom{a}{j}(-1)^j a_{l_1, l_2}(j)n^{a-l_1}i^{a-l_2}.$$ 
Applying Equation (\ref{eq:stirling2}) and Equation (\ref{eq:stirling3}) we deduce that
\begin{align*}
n^j&\left(i-\frac{1}{2}\right)^{a-j}i^{\overline{j}}(n+a)^{\underline{a-j}}\\
&=n^j\sum_{l_5}\binom{a-j}{l_5}i^{\overline{a-j-l_5}}\left(-\frac{1}{2}\right)^{l_5}i^{\overline{j}}(n+a)^{\underline{a-j}}\\
&=\left(\sum_{l_5}\binom{a-j}{l_5}\left(-\frac{1}{2}\right)^{l_5}\left(\sum_{l_6}i^{a-j-l_5}{ j\brack j-l_6}i^{j-l_6}\right)\right)\\
&\times\left(\sum_{l_3}n^j{a-j\brack a-j-l_3}(-1)^{l_3}(n+a)^{a-j-l_3}\right)\\
&=\left(\sum_{l_5}\binom{a-j}{l_5}\left(-\frac{1}{2}\right)^{l_5}\left(\sum_{l_6}i^{a-l_5-l_6}{ j\brack j-l_6}\right)\right)\\
&\times\left(\sum_{l_3}{a-j\brack a-j-l_3}(-1)^{l_3}\left(\sum_{l_4}n^{a-l_3-l_4}a^{l_4}\binom{a-j-l_3}{l_4}\right)\right).
\end{align*}
Therefore, the coefficient of  the term $i^{a-l_1}n^{a-l_2}$ in the polynomial \\
$n^j\left(i-\frac{1}{2}\right)^{a-j}i^{\overline{j}}(n+a)^{\underline{a-j}}$ equals
\begin{align*}
&a_{l_1, l_2}(j)=\left(\sum_{l_5+l_6=l_1}\binom{a-j}{l_5}\left(-\frac{1}{2}\right)^{l_5}{ j\brack j-l_6}\right)\\
&\times\left(\sum_{l_3+l_4=l_2}{a-j\brack a-j-l_3}(-1)^{l_3}a^{l_4}\binom{a-j-l_3}{l_4}\right).
\end{align*}
Since ${ j\brack j-l_6}$ is the polynomial of variable $j$ of degree $2l_6$, $\binom{a-j}{l_5}$ is the polynomial of variable $a-j$ of degree $l_5$, ${a- j\brack a- j-l_3}$
is the polynomial of variable $a-j$ of degree $2l_3$ (see (\ref{eq:stirling_since})) and $\binom{a-j-l_3}{l_4}$ is the polynomial of variable $a-j-l_3$ of degree $l_4,$
we obtain that, $a_{l_1, l_2}(j)$ is the polynomial of variable $j$ of degree less than or equal $2l_1+2l_2.$ 
Hence from Equation (\ref{eq:triangle}) we get 
$$\sum_{j=0}^a\binom{a}{j}(-1)^ja_{l_1, l_2}(j)=0\,\,\, \text{for}\,\,\, 2l_1+2l_2<a+1.$$
Therefore
\begin{equation}
\label{eq:01star}
A_{i}^{(a)}=\frac{1}{n^a(n+1)^{\overline{a}}}\sum_{{S_1}}
\left(\sum_{j=0}^{a}\binom{a}{j}(-1)^j a_{l_1, l_2}(j)\right)n^{a-l_1}i^{a-l_2},
\end{equation}
where ${S_1}=\left\{(l_1,l_2): 0\le l_1,l_2\le a,\,\, \frac{a+1}{2}\le l_1+l_2\right\}.$ 
On the other hand, from Equation (\ref{probal_eq}) for $c=i,$ $d=n-i+1$ and $z=t_i$ we deduce that
\begin{equation}
\label{eq:02star}
\binom{n}{i}i\int_0^{t_i} x^{i-1}(1-x)^{n-i}dx\le 1.
\end{equation}
Putting together Equation (\ref{eq:01star}), Equation (\ref{eq:02star}) and Identity (\ref{eq:identitysum}) for $k=i$ and $f=a-l_2$ we have
$$\sum_{i=1}^{ n }i^{a-l_2}={n}^{a-l_2+1}\frac{1}{a-l_2+1}+O(n^{a-l_2}).$$
Therefore
$$\sum_{i=1}^{ n}A_i^{(a)}\binom{n}{i} i \int_{0}^{t_i} x^{i-1}(1-x)^{n-i}dx=O\left(\frac{1}{n^{\frac{a-1}{2}}}\right).$$
This completes the proof of Lemma \ref{lem:technical}. 
\end{proof}  
The proof of the next lemma is technically complicated. Before starting the proof, we briefly explain the overall strategy of the analysis.

\textbf{Case of Equation (\ref{eq:technical2a}).}
We write $B_i^{(a)}(n+1)^{\overline{a}}$ as the polynomial in two variables $n,i-\frac{1}{2}.$ Using the property (\ref{eq:triangle})
we show that several coefficient of the polynomial  $B_i^{(a)}(n+1)^{\overline{a}}$ are zero.

\textbf{Case of Equation (\ref{eq:technical2a}).}
Firstly, we apply Equation (\ref{eq:triangle}) and reduce $b_{q_1, p_1}(a)$ to the Sum (\ref{eq:b-q-p}).
Using the definition of the Beta function we rewrite Sum (\ref{eq:100}) as the double sum and the definite integral (see
(\ref{eq:100}) and (\ref{eq:200})).
Then again we apply Equation (\ref{eq:triangle}). This reduces the integral (\ref{eq:200})) to the integral
$\int_0^1x^{\frac{a}{2}}(1-x)^{\frac{a}{2}}dx$ and the Sum (\ref{eq:technical2a})
to the Sum (\ref{eq:300}). Then, using (\ref{eq:triangle}) and (\ref{eq:gould100})
we get the desired Equality (\ref{eq:400}).
\begin{Lemma}
\label{lem:technical2}
Assume that $a$ is an odd natural number. Let $t_i=\frac{i}{n}-\frac{1}{2n}$ and let
$$B_{i}^{(a)}=\sum_{j=1}^{a}\binom{a}{j}(-1)^{j+1}\sum_{k=1}^{j}\frac{\binom{n+k-1}{k+i-1}}{\binom{n+j}{i+j}(i+j)}(t_i)^{k-j}.$$ Then
\begin{equation}
\label{eq:technical2a}
B_{i}^{(a)}=\frac{1}{(n+1)^{\overline{a}}}\sum_{S_2
}
b_{q_1, p_1}(a)n^{a-1-q_1}\left(i-\frac{1}{2}\right)^{-p_1},
\end{equation}
where  
 ${S_2}=\left\{(q_1,p_1): 0\le q_1,p_1\le a-1,\,\, q_1+p_1\ge \frac{a-1}{2}\right\},$ 
$b_{q_1, p_1}(a)$ are some constants independent on $i, n\,\,$ and 
\begin{equation}
\label{eq:technical2b}
\sum_{q_1+p_1=\frac{a-1}{2}}\frac{2}{\sqrt{2\pi}}\EBeta{a-p_1+\frac{1}{2}}{\frac{3}{2}}b_{q_1,p_1}(a)=\frac{\Gamma\left(\frac{a}{2}+1\right)}{2^{\frac{a}{2}}(1+a)}.
\end{equation}
\end{Lemma}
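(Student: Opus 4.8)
The two assertions (\ref{eq:technical2a}) and (\ref{eq:technical2b}) are handled in turn: the first is a structural statement about the Laurent polynomial $B_i^{(a)}(n+1)^{\overline{a}}$, the second an explicit evaluation of its leading block of coefficients. The first move is purely algebraic: cancelling factorials (equivalently, using the factorial form of the Beta function (\ref{Emult})) one checks, for $1\le k\le j\le a$, that $\frac{\binom{n+k-1}{k+i-1}}{\binom{n+j}{i+j}(i+j)}=\frac{(i+k)^{\overline{j-k}}}{(n+k)^{\overline{j-k+1}}}$. Since $(n+k)^{\overline{j-k+1}}$ is a contiguous block of factors of $(n+1)^{\overline{a}}$, multiplying $B_i^{(a)}$ by $(n+1)^{\overline{a}}$ and inserting $t_i=\frac{i-1/2}{n}$ turns each summand into an honest product of shifted factorials,
$$B_i^{(a)}(n+1)^{\overline{a}}=\sum_{j=1}^{a}\binom{a}{j}(-1)^{j+1}\sum_{k=1}^{j}\frac{(i+k)^{\overline{j-k}}\,n^{\,j-k}}{\left(i-\frac12\right)^{j-k}}\,(n+1)^{\overline{k-1}}\,(n+j+1)^{\overline{a-j}}.$$
Re-indexing by $l=j-k$ and writing $i=\left(i-\frac12\right)+\frac12$, I would expand $(i+k)^{\overline{l}}$ and $n^{l}(n+1)^{\overline{k-1}}(n+j+1)^{\overline{a-j}}$ into monomials using elementary symmetric functions of the sets $\{j-l+\frac12,\dots,j-\frac12\}$ and $\{1,\dots,a\}\setminus\{j-l,\dots,j\}$ respectively. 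Collecting the monomials $n^{a-1-q_1}\left(i-\frac12\right)^{-p_1}$ across all $j,k$ shows $B_i^{(a)}(n+1)^{\overline{a}}=\sum_{0\le q_1,p_1\le a-1}b_{q_1,p_1}(a)\,n^{a-1-q_1}\left(i-\frac12\right)^{-p_1}$, with each $b_{q_1,p_1}(a)$ a finite combination of those elementary symmetric functions against the weights $\binom{a}{j}(-1)^{j+1}$ — in particular a constant independent of $i$ and $n$.

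It remains to show $b_{q_1,p_1}(a)=0$ for $q_1+p_1<\frac{a-1}{2}$; this is where Euler's finite difference theorem (\ref{eq:triangle}) enters. Because the elements of both sets above are affine in the summation index $j$, the elementary symmetric functions in question are polynomials in $j$ of degree controlled by $q_1+p_1$ (the mechanism behind the polynomiality of the Stirling numbers in (\ref{eq:stirling_since})). For each fixed $l$ the corresponding slice of $b_{q_1,p_1}(a)$ is $\sum_{j=l+1}^{a}\binom{a}{j}(-1)^{j+1}(\text{polynomial in }j)$; completing the range to $0\le j\le a$, (\ref{eq:triangle}) annihilates the completed sum and leaves a short boundary sum $\sum_{j=0}^{l}\binom{a}{j}(-1)^{j}(\cdots)$. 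Resumming these boundary contributions over $l$ and simplifying with the partial binomial identity $\sum_{j=0}^{l}\binom{a}{j}(-1)^{j}\binom{j}{r}=(-1)^{l}\binom{a}{r}\binom{a-r-1}{l-r}$ collapses $b_{q_1,p_1}(a)$ to a single alternating sum over $l$ which, for $q_1+p_1<\frac{a-1}{2}$, reduces to a binomial-theorem expression of the shape $(1-1)^{m}$ with $m\ge 1$, hence vanishes; when $q_1+p_1=\frac{a-1}{2}$ the exponent $m$ drops to $0$, so the threshold is sharp and the surviving coefficients are not forced to be zero. This is exactly (\ref{eq:technical2a}) with support in $S_2$.

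For (\ref{eq:technical2b}) I would re-run the reduction in the boundary case $q_1+p_1=\frac{a-1}{2}$, retaining the surviving terms, which writes each $b_{q_1,p_1}(a)$ as an explicit finite double sum. Substituting the integral definition (\ref{eq:beta}), $\mathrm{B}\!\left(a-p_1+\frac12,\frac32\right)=\int_0^1 x^{\,a-p_1-1/2}(1-x)^{1/2}\,dx$, turns the left-hand side of (\ref{eq:technical2b}) into $\frac{2}{\sqrt{2\pi}}\int_0^1(\text{a double sum in }x)\,dx$. A second application of (\ref{eq:triangle}) inside the integral collapses the $x$-dependence to $x^{a/2}(1-x)^{a/2}$, so the integral evaluates to $\mathrm{B}\!\left(\frac a2+1,\frac a2+1\right)=\frac{\Gamma\left(\frac a2+1\right)^2}{\Gamma(a+2)}$, while the surviving outer sum collapses to precisely the alternating binomial sum $\sum_{b}\binom{(a-1)/2}{b}(-1)^{b}\frac{1}{2b+1}$ evaluated by Gould's identity (\ref{eq:gould100}). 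Plugging that value in and simplifying with $\Gamma(a+2)=(a+1)!$, $\Gamma(n+1)=n!$, and $\Gamma(z+1)=z\Gamma(z)$ (to relate $\Gamma\left(\frac a2+1\right)$ and $\left(\frac{a-1}{2}\right)!$) yields the claimed right-hand side $\frac{\Gamma\left(\frac a2+1\right)}{2^{a/2}(1+a)}$.

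The one genuinely delicate point is the combinatorial bookkeeping in the first part: establishing that the support of $B_i^{(a)}(n+1)^{\overline{a}}$ is \emph{exactly} $S_2$ — in particular that the sharp threshold $q_1+p_1=\frac{a-1}{2}$ emerges only after the boundary sums produced by the first use of (\ref{eq:triangle}) are resummed, not from a single application — and, for (\ref{eq:technical2b}), producing the surviving coefficients in a form clean enough that the Beta-integral manipulation and the second application of (\ref{eq:triangle}) go through. All of the asymptotic content of Theorem \ref{thm:mainexact_oddtwo1} is concentrated in (\ref{eq:technical2b}); (\ref{eq:technical2a}) only sets the stage, and is the analogue here of the role played by Lemma \ref{lem:sum1} and Lemma \ref{lem:technical} for the negligible sums.
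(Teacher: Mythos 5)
Your opening algebra is correct (the cancellation $\frac{\binom{n+k-1}{k+i-1}}{\binom{n+j}{i+j}(i+j)}=\frac{(i+k)^{\overline{j-k}}}{(n+k)^{\overline{j-k+1}}}$ and the resulting Laurent expansion of $B_i^{(a)}(n+1)^{\overline{a}}$ in $n$ and $i-\frac12$ match the paper's setup), but the two decisive steps are asserted rather than proved, and your slicing makes the first one harder than it needs to be. By fixing $l=j-k$ you turn each slice into a partial sum $\sum_{j=l+1}^{a}$, so (\ref{eq:triangle}) leaves boundary sums that must be resummed over $l$; you claim these collapse "to a binomial-theorem expression of the shape $(1-1)^m$", but that is not what happens: after the partial binomial identity the $l$-sum has coefficients that are polynomials in $l$ (coming from the elementary/complete symmetric functions of $\{j-l+\frac12,\dots,j-\frac12\}$ and $\{1,\dots,a\}\setminus\{j-l,\dots,j\}$), of $l$-degree up to $2(q_1+p_1)-r$, so you need a second application of Euler's finite difference theorem together with careful degree bookkeeping in $l$ — none of which is carried out — to see that the threshold $q_1+p_1\ge\frac{a-1}{2}$ emerges. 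The paper avoids all of this by summing over $k$ first for each fixed $j$ (via (\ref{eq:identitysum})), which makes $a_{q_1,p_1}(j)$ a genuine polynomial in $j$ of degree at most $2(q_1+p_1)+1$ with $a_{q_1,p_1}(0)=0$; then a single application of (\ref{eq:triangle}) over the complete range $0\le j\le a$, with no boundary corrections, gives the vanishing exactly when $2(q_1+p_1)+1<a$.

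The more serious gap is in (\ref{eq:technical2b}). The whole computation hinges on a structural fact you never identify: at the boundary $q_1+p_1=\frac{a-1}{2}$, the surviving coefficients factor, up to terms annihilated by (\ref{eq:triangle}), as $\frac{1}{q_1!}\bigl(\frac{k^2}{2}-\frac{(a-j)^2}{2}\bigr)^{q_1}\cdot\frac{1}{p_1!}\bigl((j-k)(j-\frac12)-\frac{(j-k)^2}{2}\bigr)^{p_1}$. The paper extracts this via the Eulerian-number identities (\ref{eq:euler3}) and (\ref{eq:euler1}) applied to the Stirling factors, and it is precisely this $\frac{1}{q_1!p_1!}$ structure, combined with the weight $x^{-p_1}$ coming from $\EBeta{a-p_1+\frac12}{\frac32}$, that makes the sum over $q_1+p_1=\frac{a-1}{2}$ collapse to a single $\frac{a-1}{2}$-th power inside the Beta integral; only then do the further applications of (\ref{eq:triangle}) and Gould's identity (\ref{eq:gould100}) produce $x^{\frac a2}(1-x)^{\frac a2}$ and the factor $a!\frac{\sqrt{\pi}\left(\frac{a-1}{2}\right)!}{2\Gamma(\frac a2+1)}$. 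Your proposal says "a second application of (\ref{eq:triangle}) inside the integral collapses the $x$-dependence", but without the multinomial factorization there is nothing to collapse, and moreover in your formulation the alternating sum over $j$ has already been spent on the boundary decomposition of Part 1, so the in-integral uses of (\ref{eq:triangle}) over $j$ (the paper's steps (\ref{eq:300}) and (\ref{eq:400})) are not directly available. So while the skeleton (Beta integral, finite differences, Gould) is the right one, the proof as proposed has a genuine missing idea at exactly the point where the constant $\frac{\Gamma\left(\frac a2+1\right)}{2^{a/2}(1+a)}$ is produced.
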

\begin{proof} 
We will discuss separately proof of Equation (\ref{eq:technical2a}) and (\ref{eq:technical2b}).

\textbf{Case of Equation (\ref{eq:technical2a}).}

Define
$$C_{(i,n)}^{(a)}(j)=\sum_{k=1}^{j}\frac{\binom{n+k-1}{k+i-1}}{\binom{n+j}{i+j}(i+j)}(t_i)^{k-j}.$$
Observe that
$$C_{(i,n)}^{(a)}(j)=\frac{1}{\left(i-\frac{1}{2}\right)^{a-1}(n+1)^{\overline{a}}n}\sum_{S_3
}
a_{q_1, p_1}(j)n^{a-q_1}\left(i-\frac{1}{2}\right)^{a-1-p_1},$$
where  
${S_3}=\left\{(q_1,p_1): 0\le q_1,p_1\le a-1\right\},$ 
$a_{q_1, p_1}(j)$ are some constants independent on $i, n.\,\,$
On the other hand  
\begin{align*}
&C_{(i,n)}^{(a)}(j)\\
&=\frac{(n+a)^{\underline{a-j}}}{\left(i-\frac{1}{2}\right)^{a-1}(n+1)^{\overline{a}}n}\sum_{k=1}^{j}n^{j-k}n^{(k)}\left(i-\frac{1}{2}\right)^{a-1-k-j}
(i+j-1)^{\underline{j-k}}.
\end{align*}
Applying Equation (\ref{eq:stirling2}) and Equation (\ref{eq:stirling3}) we deduce that
\begin{align*}
&n^{j-k}n^{(k)}(n+a)_{(a-j)}\\
&=\left(\sum_{q_2}n^{-k}{ k\brack k-q_2}n^{k-q_2}\right)\\
&\times\left(\sum_{q_3}n^j{a-j\brack a-j-q_3}(-1)^{q_3}(n+a)^{a-j-q_3}\right)\\
&=\left(\sum_{q_2}{ k\brack k-q_2}n^{-q_2}\right)\\
&\times\left(\sum_{q_3}{a-j\brack a-j-q_3}(-1)^{q_3}
\left(\sum_{q_4}n^{a-q_3-q_4}a^{q_4}\binom{a-j-q_3}{q_4}\right)\right).
\end{align*}
Therefore, the coefficient of  the term $n^{a-q_1}$ in the polynomial $n^{j-k}n^{\overline{k}}(n+a)^{\underline{a-j}}$ equals
\begin{equation}
 \label{eq:c}
c^{(k)}_{q_1}(j)=\sum_{q_2+q_3+q_4=q_1}\left({ k\brack k-q_2}{a-j\brack a-j-q_3}(-1)^{q_3}a^{q_4}\binom{a-j-q_3}{q_4}\right).
\end{equation}
Applying Equation (\ref{eq:stirling2}) and Equation
(\ref{eq:stirling3}) we deduce that
\begin{align*}
&\left(i-\frac{1}{2}\right)^{a-1+k-j}(i+j-1)^{\underline{j-k}}\\
&\,\,=\left(i-\frac{1}{2}\right)^{a-1+k-j} \sum_{p_2}{j-k\brack j-k-p_2}(-1)^{p_2}(i+j-1)^{j-k-p_2}\\
&\,\,=\left(i-\frac{1}{2}\right)^{a-1+k-j}\sum_{p_2}{j-k\brack j-k-p_2}(-1)^{p_2}\\
&\,\,\times\left(\sum_{p_3}\left(i-\frac{1}{2}\right)^{j-k-p_2-p_3}\left(j-\frac{1}{2}\right)^{p_3}\binom{j-k-p_2}{p_3}\right).
\end{align*}
Therefore, the coefficient of  the term $\left(i-\frac{1}{2}\right)^{a-1-p_1}$ in the polynomial
$\left(i-\frac{1}{2}\right)^{a-1+k-j}(i+j-1)^{\underline{j-k}}$ equals
\begin{equation}
\label{eq:d} 
d^{(k)}_{p_1}(j)=\sum_{p_2+p_3=p_1}{j- k\brack j- k-p_2} \binom{j-k-p_2}{p_3}(-1)^{p_2}\left(j-\frac{1}{2}\right)^{p_3}.
\end{equation}
Hence, the coefficient of the term $n^{a-q_1}\left(i-\frac{1}{2}\right)^{a-1-p_1}$ in the polynomial 
$$\sum_{k=1}^{j}n^{j-k}n^{\overline{k}}(n+a)^{\underline{a-j}}\left(i-\frac{1}{2}\right)^{a-1+k-j}(i+j-1)^{\underline{j-k}}$$
equals
$$a_{q_1,p_1}(j)=\sum_{k=1}^{j}c^{(k)}_{q_1}(j)d^{(k)}_{p_1}(j).$$
Notice that
\begin{equation}
\label{ex:zero}
 a_{q_1,p_1}(0)=0.
\end{equation}
Observe that ${k\brack k-q_2}$ is the polynomial of variable $k$ of degree $2q_2$ (see (\ref{eq:stirling_since})), 
${ a-j\brack a-j-q_3}$ is the polynomial of variable $a-j$ of degree $2q_3$ and $\binom{a-j-q_3}{q_4}$ is the polynomial of variable $a-j-q_3$ of degree $q_4.$
Therefore $c^{(k)}_{q_1}(j)$ is the polynomial of variable $k$ of degree $2q_2$
and is the polynomial of variable $j$ of degree less than or equal $2q_3+q_4.$

Observe that
${ j-k\brack j-k-p_2}$ is the polynomial of variable $j-k$ of degree $2p_2$ (see (\ref{eq:stirling_since})) 
 and $\binom{j-k-p_2}{p_3}$ is the polynomial of variable $j-k$ of degree $p_3.$ 
Therefore, $d^{(k)}_{q_1}(j)$ is the polynomial of variable $j-k$ of degree less than or equal $2p_2+p_3$
and is the polynomial of variable $j$ of degree $p_3.$

Applying these and Identity (\ref{eq:identitysum}) for  $n=j$ 
we conclude that $a_{q_1,p_1}(j)$ is the polynomial of variable $j$ of degree less than or equal 
$2q_2+2q_3+q_4+2p_2+2p_3+1.$ From  $q_2+q_3+q_4=q_1$ and $p_2+p_3=p_1$ we have that
$a_{q_1,p_1}(j)$ is the polynomial of variable $j$ of degree less than or equal 
$2p_1+2q_1+1.$

Therefore from Equation (\ref{eq:triangle}) and Equation (\ref{ex:zero}) we deduce that
$$\sum_{j=0}^{a}\binom{a}{j}(-1)^{j+1}a_{q_1,p_1}(j)=0 \,\,\,\text{for}\,\,\, q_1+p_1<\frac{a-1}{2}.$$
Let 
\begin{equation}
 \label{eq:binomiden}
 b_{q_1,p_1}(a)=\sum_{j=0}^{a}\binom{a}{j}(-1)^{j+1}a_{q_1,p_1}(j).
\end{equation}
Therefore
$$b_{q_1,p_1}(a)=0\,\,\,\text{for}\,\,\,q_1+p_1< \frac{a-1}{2}.$$
Hence
$$B_{i}^{(a)}=\frac{1}{\left(i-\frac{1}{2}\right)^{a-1}(n+1)^{\overline{a}}}\sum_{S_4
}
b_{q_1, q_2}(a)n^{a-1-q_1}\left(i-\frac{1}{2}\right)^{a-1-p_1},$$
where
$$
S_4=\left\{(q_1,p_1): 0\le q_1,p_1\le a-1\,\,q_1+p_1\ge\frac{a-1}{2}\right\}.
$$
This is enough to prove the first part of Lemma \ref{lem:technical2}.

\textbf{Case of Equation (\ref{eq:technical2b}).}

Let us recall that
$$a_{q_1,p_1}(j)=\sum_{k=1}^{j}c^{(k)}_{q_1}(j)d^{(k)}_{p_1}(j).$$
and
$$
 b_{q_1,p_1}(a)=\sum_{j=0}^{a}\binom{a}{j}(-1)^{j+1}a_{q_1,p_1}(j).
$$
(see Equation (\ref{eq:c}) and Equation (\ref{eq:d}) for the definition of $c^{(k)}_{q_1}(j)$ and $d^{(k)}_{p_1}(j)$).\\
Applying Identity (\ref{eq:euler3}) for
$m=k,$ $b=q_2$ and Identity (\ref{eq:euler3}) for
$m=a-j,$ $b=q_3$ we observe that 
the coefficient of the term $k^{2q_2}(a-j)^{2q_3}$ in the polynomial
$c^{(k)}_{q_1}(j)$ equals
$$
A_{q_2,q_3}=\left(\sum_{j_1}\left\langle\left\langle q_2\atop j_1\right\rangle\right\rangle \frac{1}{(2q_2)!}\right)
\left(\sum_{j_2}  \left\langle\left\langle q_3\atop j_2\right\rangle\right\rangle
\frac{1}{(2(q_3))!}\right)(-1)^{q_3}.
$$
Therefore, from Equation (\ref{eq:euler1}) we have
$$
A_{q_2,q_3}=\frac{(-1)^{q_3}}{(q_2)!2^{q_2}(q_1)!2^{q_1}}.
$$
Applying Identity (\ref{eq:euler3}) for
$m=j-k,$ $b=p_2$ we observe that 
the coefficient of the term $(j-k)^{2p_2}(j-k)^{p_3}\left(j-\frac{1}{2}\right)^{p_3}$ in the polynomial
$c^{(k)}_{q_1}(j)$ equals
$$
B_{p_2,p_3}=\left(\sum_{j_1}\left\langle\left\langle p_2\atop j_1\right\rangle\right\rangle \frac{1}{(2p_2)!}\right)
\frac{1}{(p_3)!}(-1)^{p_2}.
$$
Therefore, from Equation (\ref{eq:euler1}) we have
$$
B_{p_2,p_3}=\frac{(-1)^{p_2}}{(p_2)!2^{p_2}(p_3)!}.
$$
Hence
\begin{align}
c^{(k)}_{q_1}(j)&=\sum_{q_2+q_3=q_1}A_{q_2,q_3}k^{2q_2}(a-j)^{2q_3}+\sum_{l_1+l_2<2q_1}a_{l_1,l_2}k^{l_1}(a-j)^{l_2}\nonumber\\
&\label{eq:ca}=\left(\frac{k^2}{2}-\frac{(a-j)^2}{2}\right)^{q_1}\frac{1}{(q_1)!}+\sum_{l_1+l_2<2q_1}a_{l_1,l_2}k^{l_1}(a-j)^{l_2},
\end{align}
where $a_{l_1,l_2}$ are some constans independent on $k,j,$
\begin{align}
d^{(k)}_{p_1}(j)&=\sum_{p_2+p_3=p_1}B_{p_2,p_3}(j-k)^{2p_2}(j-k)^{p_3}\left(j-\frac{1}{2}\right)^{p_3}\nonumber\\
&+\sum_{l_1+l_2<2p_1}b_{l_3,l_4}(j-k)^{l_3}(j)^{l_4}\nonumber\\
&=\left((j-k)\left(j-\frac{1}{2}\right)-\frac{(j-k)^2}{2}\right)^{p_1}\frac{1}{(p_1)!}\nonumber\\
&\label{eq:ka}+\sum_{l_3+l_4<2p_1}b_{l_3,l_4}(j-k)^{l_3}(j)^{l_4},
\end{align}
where $b_{l_3,l_4}$ are some constans independent on $k,j.$

Putting together Formula (\ref{eq:ca}), Formula (\ref{eq:ka}), Identity (\ref{eq:identitysum}) for $f=l_1$ and  $f=l_3$ as well Euler's Finite
Difference Theorem (see Identity (\ref{eq:triangle})) we obtain
\begin{align}
\nonumber b_{q_1,p_1}(a)&=\sum_{j=0}^{a}\binom{a}{j}(-1)^{j+1}\sum_{k=1}^{j}c^{(k)}_{q_1}(j)d^{(k)}_{p_1}(j)\\
\nonumber &=\sum_{j=0}^{a}\binom{a}{j}(-1)^{j+1}\sum_{k=1}^{j}\left(\frac{k^2}{2}-\frac{(a-j)^2}{2}\right)^{q_1}\frac{1}{(q_1)!}\\
\label{eq:b-q-p}&\times\left((j-k)\left(j-\frac{1}{2}\right)-\frac{(j-k)^2}{2}\right)^{p_1}\frac{1}{(p_1)!}.
\end{align}
Applying this 
we get
\begin{align}
&\sum_{q_1+p_1=\frac{a-1}{2}}\EBeta{a-p_1+\frac{1}{2}}{\frac{3}{2}} b_{q_1,p_1}(a)\nonumber\\ 
&\,\,\,=\sum_{q_1+p_1=\frac{a-1}{2}}\EBeta{a-p_1+\frac{1}{2}}{\frac{3}{2}}
\sum_{j=0}^{a}\binom{a}{j}(-1)^{j+1}\sum_{k=1}^{j}c^{(k)}_{q_1}(j)d^{(k)}_{p_1}(j)\nonumber\\
&\,\,\,=\sum_{j=0}^{a}\binom{a}{j}(-1)^{j+1}\sum_{k=1}^{j} \sum_{q_1+p_1=\frac{a-1}{2}}\EBeta{a-p_1+\frac{1}{2}}{\frac{3}{2}}\frac{1}{(q_1)!}\nonumber\\
&\,\,\,\label{eq:100}\times\left(\frac{k^2}{2}-\frac{(a-j)^2}{2}\right)^{q_1}
\left((j-k)\left(j-\frac{1}{2}\right)-\frac{(j-k)^2}{2}\right)^{p_1}\frac{1}{(p_1)!}.
\end{align}

Using the definition of the Beta function (see (\ref{eq:beta}) for $c=a-p_1+\frac{1}{2},$ $d=\frac{3}{2}$)
we have
\begin{align}
&\sum_{q_1+p_1=\frac{a-1}{2}}\EBeta{a-p_1+\frac{1}{2}}{\frac{3}{2}}\nonumber\\
\,\,&\times\left(\frac{k^2}{2}-\frac{(a-j)^2}{2}\right)^{q_1}\frac{1}{(q_1)!}
\left((j-k)\left(j-\frac{1}{2}\right)-\frac{(j-k)^2}{2}\right)^{p_1}\frac{1}{(p_1)!}\nonumber\\
\,\,&=\int_{0}^{1}\sum_{q_1+p_1=\frac{a-1}{2}}(1-x)^{1/2}x^{a-1-p_1+1/2}\left(\frac{k^2}{2}-\frac{(a-j)^2}{2}\right)^{q_1}\frac{1}{(q_1)!}\nonumber\\
\,\,&\times\left((j-k)\left(j-\frac{1}{2}\right)-\frac{(j-k)^2}{2}\right)^{p_1}\frac{1}{(p_1)!}dx\nonumber\\
\,\,&=\frac{1}{\left(\frac{a-1}{2}\right)!}\int_0^1(1-x)^{1/2}x^{a-1/2}\nonumber\\
\,\,&\times\left(\frac{(j-k)\left(j-\frac{1}{2}\right)-\frac{(j-k)^2}{2}}{x}+\frac{k^2}{2}-\frac{(a-j)^2}{2}\right)^{\frac{a-1}{2}}dx\nonumber\\
\,\,&=\label{eq:200}\frac{1}{\left(\frac{a-1}{2}\right)!}\int_0^1(1-x)^{1/2}x^{a-1/2}\nonumber\\
\,\,&\times\left((j-k)(j+k)\left(\frac{1-x}{2x}\right)+a\left(j-\frac{a}{2}\right)-\frac{j-k}{2x}\right)^{\frac{a-1}{2}}dx.
\end{align}
Observe that
\begin{align*}
&\left((j-k)(j+k)\left(\frac{1-x}{2x}\right)+a\left(j-\frac{a}{2}\right)-\frac{j-k}{2x}\right)^{\frac{a-1}{2}}\\
&\,\,\,\,\,\,\,\,\,\,=\left((j-k)(j+k)\left(\frac{1-x}{2x}\right)\right)^{\frac{a-1}{2}}+
\sum_{l_3+l_4<a-1}c_{l_5,l_6}(j)^{l_3}(k)^{l_4},
\end{align*}
where $c_{l_5,l_6}$ are some constans independent on $j,k.$

Therefore, from Equation (\ref{eq:identitysum}) for $f=l_4$  and Identity (\ref{eq:triangle}) we have
\begin{align}
&\sum_{j=0}^{a}\binom{a}{j}(-1)^{j+1}
\sum_{k=1}^{j}\left((j-k)(j+k)\left(\frac{1-x}{2x}\right)+a\left(j-\frac{a}{2}\right)-\frac{j-k}{2x}\right)^{\frac{a-1}{2}}\nonumber\\
&\label{eq:300}=\left(\frac{1-x}{2x}\right)^{\frac{a-1}{2}}\sum_{j=0}^{a}\binom{a}{j}(-1)^{j+1}\sum_{k=1}^{j}\left((j-k)(j+k)\right)^{\frac{a-1}{2}}.
\end{align}
Notice that
$$((j-k)(j+k))^{\frac{a-1}{2}}=(j^2-k^2)^{\frac{a-1}{2}}=\sum_{b=0}^{\frac{a-1}{2}}j^{a-1-2b}(-1)^bk^{2b}\binom{\frac{a-1}{2}}{b}.$$
Therefore, from Equation (\ref{eq:identitysum}) for $f=2b$  and Identity (\ref{eq:gould100})
we have
\begin{align*}
\sum_{j=0}^{a}&\binom{a}{j}(-1)^{j+1}\sum_{k=1}^{j}\left((j-k)(j+k)\right)^{\frac{a-1}{2}}\\
&=\sum_{j=0}^{a}\binom{a}{j}(-1)^{j+1} \left(\frac{\sqrt{\pi}\left(\frac{a-1}{2}\right)!}{2\Gamma(\frac{a}{2}+1)}j^a+\sum_{l_7<a}d_{l_7}j^{l_7}\right),
\end{align*}
where $d_{l_7}$ are some constans indpendent on $j.$

Applying this and Identity (\ref{eq:triangle}) we get
\begin{equation}
 \label{eq:400}
\sum_{j=0}^{a}\binom{a}{j}(-1)^{j+1}\sum_{k=1}^{j}\left((j-k)(j+k)\right)^{\frac{a-1}{2}}=a!\frac{\sqrt{\pi}\left(\frac{a-1}{2}\right)!}{2\Gamma(\frac{a}{2}+1)}.
\end{equation}
Putting together Formulas (\ref{eq:100}), (\ref{eq:200}), (\ref{eq:300}), (\ref{eq:400}) and
the definition of the Beta function (see (\ref{eq:beta}) for $c=\frac{a}{2}+1,$ $d=\frac{a}{2}+1$) 
we deduce that
\begin{align*}
\sum_{q_1+p_1=\frac{a-1}{2}}&\frac{2}{\sqrt{2\pi}}\EBeta{a-p_1+\frac{1}{2}}{\frac{3}{2}}b_{q_1,p_1}(a)\\
&=\frac{a!}{2^{\frac{a}{2}}\Gamma(\frac{a}{2}+1)}\EBeta{\frac{a}{2}+1}{\frac{a}{2}+1}.
\end{align*}
Finally, using the basic identity $\EBeta{c}{d}=\frac{\Gamma(c)\Gamma(d)}{\Gamma(c+d)}$ for $c=d=\frac{a}{2}$ (see \cite[Identity 5.12.1]{NIST}) and $\Gamma(a+2)=(a+1)!$ we get
$$
\sum_{q_1+p_1=\frac{a-1}{2}}\frac{2}{\sqrt{2\pi}}\EBeta{a-p_1+\frac{1}{2}}{\frac{3}{2}}b_{q_1,p_1}(a)= \frac{\Gamma\left(\frac{a}{2}+1\right)}{2^{\frac{a}{2}}(1+a)}. 
$$
This finishes the proof of Lemma \ref{lem:technical2}. 
\end{proof}

We are now ready to give the precise asymptotic in Lemma \ref{lem:technical3}. 

Before proving Lemma \ref{lem:technical3}, let us notice that there is some interest in research community for finding the asymptotics of the sum
similar to (\ref{eq:szp}). Kl{\o}ve in \cite{klove} studied the average worst case probability of undetected error for linear codes of length $n$ and dimension $k$ over an alphabet of size $q$
 and analyzed the following sum \\
 $\sum_{i=1}^{n}\binom{n}{i}\left(\frac{i}{n}\right)^i\left(1-\frac{i}{n}\right)^{n-i}.$
In \cite{szpa_ieee} the author obtained an asymptotics expansion of the more general sums
$\sum_{i=1}^{n}\binom{n-k}{i}\left(\frac{i}{n}\right)^i\left(1-\frac{i}{n}\right)^{n-i},$
for $k\ge 0.$ In this paper the technique used in the proof belongs to advances analytical tools.
Later Hwang in \cite{hwang} derived uniform asymptotic expressions of some Abel sums appearing in some problems in coding theory.
In \cite{spa_2013}  the authors consider the expected maximum total (i.e., sum) of movements of $n$
identical sensors placed uniformly at random in a unit interval so as to attain complete coverage of the unit interval $[0,1]$
and prove elementary the following tight asymptotic result
$$\sum_{i=1}^{n}2i\binom{n}{i}(1-t_i)^{n-i+1}(t_i)^{i}=\Theta(\sqrt{n}),\,\,\, \text{where}\,\,\, t_i=\frac{i}{n}-\frac{1}{2n}.$$
In the proof of Lemma \ref{lem:technical3} we also apply elementary techniques such as Stirling's formula (\ref{eq:stirlingform}), (\ref{eq:stirling2form}),
basic inequalities and some elementary approximation. 
\begin{Lemma}
\label{lem:technical3}
Let $c\ge 0$ and let $t_i=\frac{i}{n}-\frac{1}{2n}.$ 
Then
\begin{equation}
\label{eq:szp}
\sum_{i=1}^{n}2i\binom{n}{i}(1-t_i)^{n-i+1}(t_i)^{i+c}=n^{3/2}\frac{2}{\sqrt{2\pi}}\EBeta{c+\frac{3}{2}}{\frac{3}{2}}+O(n).
\end{equation}
\end{Lemma}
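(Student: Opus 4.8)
The plan is to estimate the sum $\sum_{i=1}^{n}2i\binom{n}{i}(1-t_i)^{n-i+1}(t_i)^{i+c}$ by a Laplace-type argument, identifying where the summand is concentrated and approximating it there. First I would substitute $t_i=\frac{i}{n}-\frac{1}{2n}=\frac{2i-1}{2n}$, so that $1-t_i=\frac{2n-2i+1}{2n}$, and write each summand explicitly in terms of $i$ and $n$. Using the two forms of Stirling's formula (\ref{eq:stirlingform}) and (\ref{eq:stirling2form}) on $\binom{n}{i}=\frac{n!}{i!\,(n-i)!}$, I would obtain, after collecting powers of $2n$, an expression of the shape $2i\binom{n}{i}(1-t_i)^{n-i+1}(t_i)^{i+c}\approx \frac{1}{\sqrt{2\pi}}\,g(i/n)\,\phi_n(i)$ where $\phi_n$ accounts for the ratios $\left(\frac{2i-1}{2i}\right)^{i}$ and $\left(\frac{2n-2i+1}{2n-2i}\right)^{n-i}$ (both of which tend to $e^{\pm 1/2}$ and are bounded), and $g$ is a smooth function of the continuous variable $u=i/n$ carrying the factor $u^{c+3/2}(1-u)^{3/2}$ together with the $\sqrt{n}$ and lower-order algebraic powers of $i$ and $n-i$.

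The main point is that the ``exponential part'' of the summand is essentially $u^{i}(1-u)^{n-i}$ evaluated at $u=i/n$, i.e. $\left(\frac{i}{n}\right)^{i}\left(1-\frac{i}{n}\right)^{n-i}$, which after Stirling cancels against $\binom{n}{i}$ up to a $\frac{1}{\sqrt{2\pi}}\sqrt{\frac{n}{i(n-i)}}$ factor; this is exactly the phenomenon exploited in \cite{klove,szpa_ieee,hwang} and in \cite{spa_2013}. Hence I expect the summand to be of order $\frac{1}{\sqrt{2\pi}}\sqrt{n}\cdot\big(\frac{i}{n}\big)^{c+1}\big(1-\frac{i}{n}\big)\cdot\sqrt{\frac{n}{i(n-i)}}\cdot n \cdot(\text{bounded factor})$ — in other words, replacing the sum over $i$ by $n\int_0^1 (\cdot)\,du$, the whole expression behaves like $n^{3/2}\cdot\frac{2}{\sqrt{2\pi}}\int_0^1 u^{c+1/2}(1-u)^{1/2}\,du = n^{3/2}\cdot\frac{2}{\sqrt{2\pi}}\EBeta{c+\frac{3}{2}}{\frac{3}{2}}$, using (\ref{eq:beta}). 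So the strategy is: (i) isolate the leading term of the summand via (\ref{eq:stirling2form}); (ii) bound the ratio factors $\left(\frac{2i-1}{2i}\right)^{i}$ and $\left(\frac{2n-2i+1}{2n-2i}\right)^{n-i}$ by $e^{-1/2}(1+O(1/i))$ and $e^{1/2}(1+O(1/(n-i)))$ respectively, which contributes only to the error term; (iii) recognize the leading sum as a Riemann sum for the Beta integral and control the Riemann-sum error and the contributions from $i$ near $0$ and $n$ separately.

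The hard part will be (iii): making the approximation of the sum by the integral rigorous with error genuinely $O(n)$ rather than $O(n^{3/2}/\sqrt{n})=O(n)$ only heuristically. Near the endpoints $i=O(1)$ and $n-i=O(1)$ the Stirling estimates degrade, so I would bound those $O(1)$-many terms crudely (each is easily $O(n)$ after noting $\binom{n}{i}t_i^{\,i+c}\le \binom{n}{i}(i/n)^{i}= O(\sqrt{n})$ by Stirling, times the $2i(1-t_i)^{n-i+1}=O(n)$ prefactor — in fact for fixed $i$ the summand is $O(1)$, and for $i$ up to any fixed constant the total is $O(1)$; symmetrically at the top end $(1-t_i)^{n-i+1}$ is a fixed power so that block is $O(\sqrt n)$). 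For the bulk $K\le i\le n-K$ with $K$ a large constant, I would use $\left(1+\frac{x}{m}\right)^{m}=e^{x}\left(1+O\!\left(\frac1m\right)\right)$ uniformly to pull out $e^{\pm1/2}$, expand $\sqrt{\frac{n}{i(n-i)}}$ and the algebraic prefactors as smooth functions of $u=i/n$ with bounded derivatives on $[\delta,1-\delta]$, and invoke the standard Euler–Maclaurin/midpoint bound $\big|\sum_{i} h(i/n)\cdot\frac1n - \int h\big| = O(1/n)$ for $C^1$ functions $h$; multiplying back by the $n^{3/2}$ scale gives the claimed $O(n)$ error. I would keep track of the $+1$ shift in the exponent $n-i+1$ (it only multiplies by the bounded factor $(1-t_i)=1-u+O(1/n)$) and of the extra power $c$ in $(t_i)^{i+c}=(t_i)^{i}(t_i)^{c}$, the latter simply inserting $u^{c}$ into the integrand, which is what produces the parameter $c+\tfrac32$ in the Beta function. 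Assembling these pieces yields (\ref{eq:szp}).
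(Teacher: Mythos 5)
Your proposal is correct and follows essentially the same route as the paper's proof: apply Stirling's formula to reduce the summand to $\frac{2}{\sqrt{2\pi}}\sqrt{n-i}\left(\frac{i}{n}\right)^{\frac{1}{2}+c}\left(1+O\left(\frac{1}{\sqrt{n}}\right)\right)$, bound the near-endpoint contributions crudely, and compare the remaining sum with the Beta integral $n^{2+c}\EBeta{c+\frac{3}{2}}{\frac{3}{2}}$. The only differences are implementation details: the paper cuts the tails at $\lfloor\sqrt{n}\rfloor$ (using the two-sided Stirling bounds there) and controls the sum--integral discrepancy via unimodality/telescoping of $x^{\frac{1}{2}+c}(n-x)^{\frac{1}{2}}$ rather than your $C^1$ Euler--Maclaurin bound, which near the endpoints should likewise be replaced by a bounded-variation argument since the derivative of $u^{c+\frac{1}{2}}(1-u)^{\frac{1}{2}}$ blows up there.
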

\begin{proof} 
Let $E_i=\frac{2i\left(n-i+\frac{1}{2}\right)}{n}\binom{n}{i}\left(\frac{i}{n}-\frac{1}{2n}\right)^{i+c}\left(1-\frac{i}{n}+\frac{1}{2n}\right)^{n-i}.$ We divide
the sum into four parts:
\begin{equation}
\label{eq:partition}
\sum_{i=1}^{n}E_i=\sum_{i=1}^{\lfloor \sqrt{n}\rfloor}E_i+\sum_{\lfloor \sqrt{n}\rfloor+1}^{n-\lfloor \sqrt{n}\rfloor}E_i+\sum_{n-\lfloor \sqrt{n}\rfloor +1}^{n-1}E_i+E_n
\end{equation}
We approximate the four parts separately. It is easy to see that $E_n=\Theta\left(1\right).$ For the first and third term, we use Stirling's formula (\ref{eq:stirlingform})
for $m=n,$ $m=i$ and $m=n-i,$ as well as Inequalities $e^{\frac{1}{12n}}<e,$ $e^{\frac{1}{12i+1}+\frac{1}{12(n-i)+1}}>1$
to deduce that
$$
E_i\le\frac{2e}{\sqrt{2\pi}}\frac{1}{n^{\frac{1}{2}+c}}\sqrt{(n-i)i}
\left(1+\frac{1}{2(n-i)}\right)\left(1-\frac{1}{2i}\right)^{i+c}\left(1+\frac{1}{2(n-i)}\right)^{n-i}.
$$
Applying the basic inequality $\left(1+\frac{1}{x}\right)^x<e,$ when $x\ge 1$ for $x=2(n-i)$ we have
$$E_i\le\frac{3e^{\frac{3}{2}}}{\sqrt{2\pi}}\frac{1}{n^{\frac{1}{2}+c}}\sqrt{(n-i)i}\le\frac{3e^{\frac{3}{2}}}{\sqrt{2\pi}}\frac{n^{\frac{1}{2}}}{n^{n^c}}.$$
Therefore
$$\sum_{i=1}^{\lfloor \sqrt{n}\rfloor}E_i+\sum_{n-\lfloor \sqrt{n}\rfloor +1}^{n-1}E_i=O\left(\frac{n}{n^{c}}\right)=O(n).$$
Hence the first, third and fourth term contribute $O(n)$ and the asymptotics depends on the second term.

For the second term $(\lfloor \sqrt{n}\rfloor+1\le i\le n-\lfloor \sqrt{n}\rfloor)$ we use Stirling's formula (\ref{eq:stirling2form})
for $m=n,$ $m=i$ and $m=n-i$ 
to deduce that
\begin{align*}
\binom{n}{i}&\left(\frac{i}{n}\right)^{i+c}\left(1-\frac{i}{n}\right)^{n-i}\\
&=\frac{1}{\sqrt{2\pi}\sqrt{n-i}}
\left(\frac{i}{n}\right)^{\left(-\frac{1}{2}+c\right)}\frac{1+O\left(\frac{1}{n}\right)}{\left(1+O\left(\frac{1}{i}\right)\right)\left(1+O\left(\frac{1}{n-i}\right)\right)}\\
&=\frac{1}{\sqrt{2\pi}\sqrt{n-i}}
\left(\frac{i}{n}\right)^{\left(-\frac{1}{2}+c\right)}
\left(1+O\left(\frac{1}{\sqrt{n}}\right)\right).
\end{align*}
Hence
$$
E_i=\frac{2\sqrt{n-i}}{\sqrt{2\pi}}\left(\frac{i}{n}\right)^{\frac{1}{2}+c}\left(1-\frac{1}{2i}\right)^{i+c}
\left(1+\frac{1}{2(n-i)}\right)^{n-i+1}
\left(1+O\left(\frac{1}{\sqrt{n}}\right)\right).
$$
Now we apply the approximations $\ln(1+x)=x+O(x^2),$ $e^x=1+O(x)$ and get
$\left(1-\frac{1}{2i}\right)^{i+c}\left(1+\frac{1}{2(n-i)}\right)^{n-i+1}=1+O\left(\frac{1}{\sqrt{n}}\right).$ Therefore
$$E_i=\frac{2}{\sqrt{2\pi}}\sqrt{n-i}\left(\frac{i}{n}\right)^{\frac{1}{2}+c}\left(1+O\left(\frac{1}{\sqrt{n}}\right)\right).$$
Using the inequality $\sqrt{n-i}\left(\frac{i}{n}\right)^{\frac{1}{2}+c}\le\frac{1}{2(1+c)}\left(\frac{1+2c}{2(1+c)}\right)^{\frac{1}{2}+c} \sqrt{n}$ we get
$$\sum_{i=1}^{\lfloor \sqrt{n}\rfloor}\sqrt{n-i}\left(\frac{i}{n}\right)^{\frac{1}{2}+c}
+\sum_{n-\lfloor \sqrt{n}\rfloor +1}^{n}\sqrt{n-i}\left(\frac{i}{n}\right)^{\frac{1}{2}+c}=O(n).$$
Therefore, we can add the terms back in, so we have
\begin{align*}
\sum_{\lfloor \sqrt{n}\rfloor+1}^{n-\lfloor \sqrt{n}\rfloor}E_i=&\sum_{i=1}^{n}E_i+O(n)\\
=&\frac{2}{\sqrt{2\pi}}\left(1+O\left(\frac{1}{\sqrt{n}}\right)\right)\frac{1}{n^{\frac{1}{2}+c}}\sum_{i=1}^{n}(n-i)^{\frac{1}{2}}i^{\frac{1}{2}+c}+O(n).
\end{align*}
The remaining sum we approximate with integral. Hence
$$\sum_{i=0}^{n}(n-i)^{\frac{1}{2}}i^{\frac{1}{2}+c}=\int_0^n(n-x)^{\frac{1}{2}}x^{\frac{1}{2}+c}dx+\Delta,$$
with $|\Delta|\le\sum_{i=0}^{n}\max_{i\le x<i+1}|f(x)-f(i)|$ (see \cite[Page 179]{sedgewick}.)
Observe that, the function $f(x)=x^{\frac{1}{2}+c}(n-x)^{\frac{1}{2}}$ is monotone increasing over the interval $\left[0,n\frac{1+2c}{2(1+c)}\right]$ and monotone decreasing over the interval
$\left[n\frac{1+2c}{2(1+c)},n\right].$
Hence the error term $|\Delta|$ telescopes on the interval $\left[0,\left\lfloor n\frac{1+2c}{2(1+c)}\right\rfloor \right]$ and telescopes on the interval 
$\left[\left\lceil n\frac{1+2c}{2(1+c)}\right\rceil,n\right].$
Therefore $|\Delta|=O(n^{1+c}).$
Notice that (see Equation (\ref{eq:beta}))
\begin{align*}
\int_0^n(n-x)^{\frac{1}{2}}x^{\frac{1}{2}+c}dx&=n^{2+c}\int_0^1(1-x)^{\frac{1}{2}}x^{\frac{1}{2}+c}dx\\
&=n^{2+c}\EBeta{c+\frac{3}{2}}{\frac{3}{2}}.
\end{align*}
Putting all together we deduce that the second term contributes 
\begin{align*}
&\sum_{\lfloor \sqrt{n}\rfloor+1}^{n-\lfloor \sqrt{n}\rfloor}E_i\\
&\,\,\,\,=
\frac{2}{\sqrt{2\pi}}\left(1+O\left(\frac{1}{\sqrt{n}}\right)\right)
\frac{1}{n^{\frac{1}{2}+c}}\left(n^{2+c}\EBeta{c+\frac{3}{2}}{\frac{3}{2}}+O\left(n^{1+c}\right)\right)+O(n)\\
&\,\,\,\,\,=\frac{2}{\sqrt{2\pi}}n^{\frac{3}{2}}\EBeta{c+\frac{3}{2}}{\frac{3}{2}}
+O(n).
\end{align*}
This  easily completes the proof of Lemma \ref{lem:technical3}. 
\end{proof}
Finally, we can prove the main theorem of this paper.
\begin{proof} (Theorem \ref{thm:mainexact_oddtwo1})
Assume that $a$ is an odd natural number.
Let $X_i$ be the $i-th$ order statistic, i.e., the position of the $i-th$ sensor
in the interval $[0, 1]$. It turns out (see \cite{nagaraja_1992}) that $X_i$ obeys
the Beta distribution with parameters $i, n-i+1$ and has density $i\binom{n}{i}x^{i-1}(1-x)^{n-i}$ (see Equations 
(\ref{eq:prosto}--\ref{Emult})). Let $E_i^{(a)}$ be the expected distance to the power $a$ between $X_i$ and the $i^{th}$ sensor position, $t_i=\frac{i}{n}-\frac{1}{2n},$ on the
unit interval. Therefore
$$
        E_i^{(a)}=i\binom{n}{i}\int_{0}^{1}|t_i-x|^a x^{i-1}(1-x)^{n-i}dx.
$$
To prove the asymptotic result observe that
\begin{equation}
\label{eq:ba20}
E_{i}^{(a)}=E_{i}^{(a,1)}+E_{i}^{(a,2)},
\end{equation}
where
\begin{equation}
\label{eq:latwe}
E_{i}^{(a,1)} = i\binom{n}{i}\int_{0}^{1}(x-t_i)^a x^{i-1}(1-x)^{n-i}dx,
\end{equation}
\begin{equation}
\label{eq:trudne}
E_{i}^{(a,2)} = 2i\binom{n}{i}\int_{0}^{t_i}(t_i-x)^a x^{i-1}(1-x)^{n-i}dx.
\end{equation}
The proof of Theorem \ref{thm:mainexact_oddtwo1} proceeds along the following steps.

From Lemma \ref{lem:sum1} we deduce that the sum $\sum_{i=1}^{n}E_{i}^{(a,1)}$ is negligibly and contributes
$O\left(\frac{1}{n^{\frac{a-1}{2}}}\right)$ (see Equation (\ref{first:eq})).

Then we write $E_{i}^{(a,2)}$ as the sum of $E_{i}^{(a,2,1)}$ and $E_{i}^{(a,2,2)}.$ Using Lemma \ref{lem:technical} we prove that
the sum $\sum_{i=1}^{n}E_{i}^{(a,2,1)}$ is also negligibly and contributes $O\left(\frac{1}{n^{\frac{a-1}{2}}}\right)$
(see Equation (\ref{second:eq})). 

Further, using Lemma \ref{lem:technical2} we decompose $E_{i}^{(a,2,2)}$ into the sum of 
$E_{i}^{(a,2,2,1)}$ and $F_{i}^{(a,2,2,2)}.$ 
The sum $\sum_{i=1}^{n}E_{i}^{(a,2,2,1)}$ is also neglibly
and contributes $O\left(\frac{1}{n^{\frac{a-1}{2}}}\right)$
(see Equation (\ref{eq:star4})). 

Thus the asymptotic depends on the expression given by the summand $\sum_{i=1}^{n}E_{i}^{(a,2,2,2)}$ 
Finally combining together  Lemma \ref{lem:technical2} and Lemma \ref{lem:technical3}
we deduce the main result (see Equation (\ref{eq:star5})).

We now prove the desired asymptotics.
Firstly we estimate $E_{i}^{(a,1)}.$ We show that
\begin{equation}
 \label{first:eq}
 \sum_{i=1}^n E_{i}^{(a,1)}=O\left(\frac{1}{n^{\frac{a-1}{2}}}\right).
\end{equation}
Let
$$E^{(a,1)}_{i,j}=i\binom{n}{i}{t_i}^{a-j}\binom{a}{j}(-1)^j\int_{0}^{1}x^j x^{i-1}(1-x)^{n-i}dx$$
for $j\in\{0,1,\dots ,a\}$ and $i\in\{1,2,\dots ,n\}.$ Observe that
$$E^{(a,1)}_i=\sum_{j=0}^{a}E^{(a,1)}_{i,j}.$$
The definition of the Beta function and Identity (\ref{Emult}) imply that
$$
        E^{(a,1)}_{i,j}=\left(\frac{1}{n}\right)^a\binom{a}{j}(-1)^j n^j  \frac{ \left(i-\frac{1}{2}\right)^{a-j}\cdot i^{(j)}}{(n+1)^{(j)}}.
$$
Applying Lemma \ref{lem:sum1} to the sequence $E^{(a,1)}_{i,j}$ we conclude that
$$\sum_{j=0}^{a}\sum_{i=1}^{n} E^{(a)}_{i,j}=O\left(\frac{1}{n^{\frac{a-1}{2}}}\right).$$
This finishes the proof of Equation (\ref{first:eq}).

Now we estimate $E_i^{(a,2)}.$ Let 
$$E^{(a,2)}_{i,j}=2i\binom{n}{i}{t_i}^{a-j}\binom{a}{j}(-1)^j\int_{0}^{t_i}x^j x^{i-1}(1-x)^{n-i}dx$$
for $j\in\{0,1,\dots ,a\}$ and $i\in\{1,2,\dots ,n\}.$ Observe that
$$E^{(a,2)}_i=\sum_{j=0}^{a}E^{(a,2)}_{i,j}.$$
On the other hand, Equation (\ref{Emult}) and Equation (\ref{incomplete:first}) imply that
$$\binom{n+j}{i+j}(i+j)\int_0^{t_i}x^j x^{i-1}(1-x)^{n-i}dx=I(t_i,i+j,n-i+1).$$ 
Hence from Equation (\ref{incomplete:req}) we get
\begin{align*}
I&(t_i,i+j,n-i+1)\\
&=I(t_i,i,n-i)-\sum_{k=1}^{j}\binom{n+k-1}{k+i-1}(1-t_i)^{n-i+1}(t_i)^{i+k-1}.
\end{align*}
Therefore
$$E^{(a,2)}_{i,j}=E^{(a,2,1)}_{i,j}+ E^{(a,2,2)}_{i,j},$$
where 
$$E^{(a,2,1)}_{i,j}=2\frac{i\binom{n}{i}t_i^{a-j}\binom{a}{j}(-1)j}{\binom{n+j}{i+j}(i+j)}\int_{0}^{t_i}x^{i-1}(1-x)^{n-i}dx$$
and
$$E^{(a,2,2)}_{i,j}=2i\binom{n}{i}t_i^{a-j}\binom{a}{j}(-1)^{j+1}\sum_{k=1}^j\frac{\binom{n+k-1}{k+i-1}}{\binom{n+j}{i+j}(i+j)}
(1-t_i)^{n-i+1}(t_i)^{i+k-1}.$$
Let $E^{(a,2,1)}_{i}=\sum_{j=0}^aE^{(a,2,1)}_{i,j}$ and 
$E^{(a,2,2)}_{i}=\sum_{j=0}^aE^{(a,2,2)}_{i,j}.$ Hence
\begin{equation}
\label{eq:a20}
E^{(a,2)}_{i}=E^{(a,2,1)}_{i}+E^{(a,2,2)}_{i}.
\end{equation}
Using Lemma \ref{lem:technical} we get
\begin{equation}
  \label{second:eq}
 \sum_{i=1}^{  n }E_{i}^{(a,2,1)}=O\left(\frac{1}{n^{\frac{a-1}{2}}}\right).
\end{equation}
Observe that
$$
E_{i}^{(a,2,2)}=2i\binom{n}{i}(1-t_i)^{n-i+1} (t_i)^{i-1+a}
\sum_{j=1}^{a}\binom{a}{j}(-1)^{j+1}\sum_{k=1}^{j}\frac{\binom{n+k-1}{k+i-1}}{\binom{n+j}{i+j}(i+j)}(t_i)^{k-j}.
$$

Let
\begin{align*}
&E_{i}^{(a,2,2,1)}\\
&\,\,\,\,=2i\binom{n}{i}(1-t_i)^{n-i+1}(t_i)^{i-1+a}\frac{1}{(n+1)^{\overline{a}}}
\sum_{
S_5
}
b_{q_1, p_1}(a)n^{a-1-q_1}\left(i-\frac{1}{2}\right)^{-p_1},
\end{align*}
\begin{align*}
&E_{i}^{(a,2,2,2)}\\
&\,\,\,\,=2i\binom{n}{i}(1-t_i)^{n-i+1}(t_i)^{i-1+a}\frac{1}{(n+1)^{\overline{a}}}
\sum_{
S_6
}
b_{q_1, p_1}(a)n^{a-1-q_1}\left(i-\frac{1}{2}\right)^{-p_1},
\end{align*}
where
$$
S_5=\left\{(q_1,p_1): 0\le q_1,p_1\le a-1, q_1+p_1> \frac{a-1}{2} \right\},
$$
$$
S_6=\left\{(q_1,p_1): 
0\le q_1, p_1\le a-1, q_1+p_1= \frac{a-1}{2}
\right\}.
$$
Applying Lemma \ref{lem:technical2} we deduce that
\begin{equation}
\label{eq:a24}
E_{i}^{(a,2,2)}=E_{i}^{(a,2,2,1)}+E_{i}^{(a,2,2,2)}.
\end{equation}
From Lemma \ref{lem:technical3} for $c=a-1-p_1$ we have
\begin{equation}
\label{eq:star4}
\sum_{i=1}^{n}E_{i}^{(a,2,2,1)}=O\left(\frac{1}{n^{\frac{a}{2}}}\right).
\end{equation}
From Lemma \ref{lem:technical3} for $c=a-1-p_1$ and
Equation (\ref{eq:technical2b}) in Lemma \ref{lem:technical2}
we deduce that
\begin{equation}
\label{eq:star5}
\sum_{i=1}^{n}E_{i}^{(a,2,2,2)}=\frac{\Gamma\left(\frac{a}{2}+1\right)}{2^{\frac{a}{2}}(1+a)}\frac{1}{n^{\frac{a}{2}-1}}+O\left(\frac{1}{n^{\frac{a-1}{2}}}\right).
\end{equation}
Finally,  putting together 
Equations (\ref{eq:ba20}--\ref{eq:star5})
finishes the proof of Theorem \ref{thm:mainexact_oddtwo1}. 
\end{proof}
\bibliographystyle{plain}
\bibliography{refs}

\begin{thebibliography}{10}

\bibitem{nagaraja_1992}
B.~Arnold, N.~Balakrishnan, and H.~Nagaraja.
\newblock {\em A first course in order statistics}, volume~54.
\newblock SIAM, 1992.

\bibitem{feller1968}
W.~Feller.
\newblock {\em An Introduction to Probability Theory and its Applications},
  volume~1.
\newblock John Wiley, NY, 1968.

\bibitem{sedgewick}
P.~Flajolet and B.~Sedgewick.
\newblock {\em An Introduction to the Analysis of Algorithms}.
\newblock Addison-Wesley, 1995.

\bibitem{Gould}
H.~Gould and J.~Quaintance.
\newblock {\em Combinatorial Identities for Stirling Numbers}.
\newblock World Scientific Publishing Co., Singapore, 2015.

\bibitem{concrete_1994}
R.~Graham, D.~Knuth, and O.~Patashnik.
\newblock {\em Concrete Mathematics A Foundation for Computer Science}.
\newblock Addison-Wesley, Reading, MA, 1994.

\bibitem{hwang}
H.~Hwang.
\newblock Uniform asymptotics of some abel sums arising in coding theory.
\newblock {\em Theoretical Computer Science}, 263:145--158, 2001.

\bibitem{kapelkokranakisIPL}
R.~Kapelko and E.~Kranakis.
\newblock On the displacement for covering a unit interval with randomly placed
  sensors.
\newblock {\em Information Processing Letters}, 116:710--717, 2016.

\bibitem{klove}
T.~Kl{\o}ve.
\newblock Bounds on the worst case probability of undetected error.
\newblock {\em IEEE Transactions on Information Theory}, 41(1):298--300, 1995.

\bibitem{spa_2013}
E.~Kranakis, D.~Krizanc, O.~Morales-Ponce, L.~Narayanan, J.~Opatrny, and
  S.~Shende.
\newblock Expected sum and maximum of displacement of random sensors for
  coverage of a domain.
\newblock In {\em Proceedings of the 25th ACM symposium on Parallelism in
  algorithms and architectures}, pages 73--82. ACM, 2013.

\bibitem{NIST}
NIST Digital~Library of~Mathematical~Functions.
\newblock http://dlmf.nist.gov/8.17.

\bibitem{szpa_ieee}
W.~Szpankowski.
\newblock On asymptotics of certain sums arising in coding theory.
\newblock {\em IEEE Transactions on Information Theory}, 41(6):2087--2090,
  1995.

\end{thebibliography}

\end{document}